\newcommand{\suppress}[1]{}
\newtheorem{theorem}{Theorem}[section]
\newtheorem{lemma}[theorem]{Lemma}
\newtheorem{corollary}[theorem]{Corollary}
\newtheorem{definition}[theorem]{Definition}
\theoremstyle{definition}
\numberwithin{equation}{section}
\newcommand{\integers}{{\mathbb Z}}
\newcommand{\union}{\cup}
\newcommand{\intersect}{\cap}
\newcommand{\eqdef}{\coloneqq}
\newcommand{\size}[1]{\left| #1 \right|}
\newcommand{\set}[1]{\left\{ #1 \right\}}
\newcommand{\floor}[1]{{\lfloor #1 \rfloor}}
\newcommand{\ceil}[1]{{\lceil #1 \rceil}}
\newcommand{\Order}{\mathrm{O}}
\newcommand{\order}{\mathrm{o}}
\newcommand{\bR}{{\bm R}}
\newcommand{\bh}{{\bm h}}
\title{Deterministic Algorithms for the Hidden Subgroup Problem}
\author{Ashwin Nayak~\thanks{Department of Combinatorics and Optimization,
and Institute for Quantum Computing, 
University of Waterloo, 200 University Ave.\ W., Waterloo, ON,
N2L~3G1, Canada. Email: \texttt{ashwin.nayak@uwaterloo.ca}~.} \\
University of Waterloo
}
\date{March 15, 2022}
\begin{document}

\maketitle

\begin{abstract}
We present deterministic algorithms for the Hidden Subgroup Problem. The first algorithm, for abelian groups, achieves the same asymptotic worst-case query complexity as the optimal randomized algorithm, namely~$ \Order(\sqrt{ n}\, )$, where~$n$ is the order of the group. The analogous algorithm for non-abelian groups comes within a~$\sqrt{ \log n}$ factor of the optimal randomized query complexity.

The best known randomized algorithm for the Hidden Subgroup Problem has \emph{expected\/} query complexity that is sensitive to the input, namely~$ \Order(\sqrt{ n/m}\, )$, where~$m$ is the order of the hidden subgroup. In the first version of this article~\cite[Sec.~5]{Nayak21-hsp-classical}, we asked if there is a deterministic algorithm whose query complexity has a similar dependence on the order of the hidden subgroup. Prompted by this question, Ye and Li~\cite{YL21-hsp-classical} present deterministic algorithms for \emph{abelian\/} groups which solve the problem with~$ \Order(\sqrt{ n/m }\, )$ queries, and find the hidden subgroup with~$ \Order( \sqrt{ n (\log m)  / m} + \log m ) $ queries. Moreover, they exhibit instances which show that in general, the deterministic query complexity of the problem may be~$\order(\sqrt{ n/m } \,)$, and that of \emph{finding\/} the entire subgroup may also be~$\order(\sqrt{ n/m } \,)$ or even~$\upomega(\sqrt{ n/m } \,) $.

We present a different deterministic algorithm for the Hidden Subgroup Problem that also has query complexity~$ \Order(\sqrt{ n/m }\, )$ for abelian groups. The algorithm is arguably simpler. Moreover, it works for non-abelian groups, and has query complexity~$ \Order(\sqrt{ (n/m) \log (n/m) }\,) $ for a large class of instances, such as those over supersolvable groups. We build on this to design deterministic algorithms to find the hidden subgroup for all abelian and some non-abelian instances, at the cost of a~$\log m$ multiplicative factor increase in the query complexity.
\end{abstract}

\section{Introduction}

In the Simon Problem with parameter~$k$, we are given an oracle for a function~$f : \integers_2^k \rightarrow S$, for some co-domain~$S$ with~$\size{S} \ge 2^k$. The function~$f$ is either injective, or there is an unknown non-zero element~$s \in \integers_2^k$ such that for all~$x,y \in \integers_2^k$, we have~$f(x) = f(y)$ if and only if~$x + y \in \set{0, s}$. In the latter case, we say that the function ``hides'' the element~$s$, and call a pair of distinct inputs~$x,y$ a ``collision'' if~$f(x) = f(y)$. The task is to determine which of the two cases holds. 

Simon designed a bounded-error quantum algorithm with query complexity~$\Order(k)$ for the eponymous problem, and showed that any bounded-error classical (i.e., randomized) algorithm for the problem requires~$\Omega(2^{k/2})$ queries~\cite{S97-simon-algorithm}. (The query lower bound stated in the paper is $\Omega(2^{k/4})$ for classical algorithms with error at most~$\tfrac{1}{2} - 2^{-k/2}$. However, the proof can be modified in a straightforward manner to show that if the algorithm makes error at most~$\tfrac{1}{4}$, at least~$2^{k/2 - 1}$ queries are required.) The lower bound for classical algorithms is optimal up to a constant factor, as is implied by the ``Birthday Paradox'': if we pick~$t \eqdef \ceil{2^{k/2 + 1}}$ elements~$X_1, X_2, \dotsc, X_t$ independently and uniformly at random from~$\integers_2^k$, if~$f$ hides some non-zero element, with probability at least~$3/4$ we find a collision (i.e., a pair~$X_i, X_j$ with~$i,j \in [t]$ such that~$X_i \neq X_j$ but~$f(X_i) = f(X_j)$). 

We present a simple, \emph{deterministic\/} algorithm for the Simon problem that achieves the asymptotically optimal classical query complexity (Theorem~\ref{thm-simon-det}). Since posting the first version of this article~\cite{Nayak21-hsp-classical}, we have learnt that algorithms achieving the same asymptotic query complexity, including the same algorithm, were known before~\cite{CQ18-simon-problem,WQTLC21-generalized-simon,YHLW21-generalized-simon}. Nevertheless, it is instructive to understand the algorithm underlying Theorem~\ref{thm-simon-det}, as it forms the basis of the generalizations that we describe next.

The Simon Problem is a special case of the Hidden Subgroup Problem (see, e.g., Ref.~\cite{CvD10-quantum-algorithms}). In the Hidden Subgroup Problem, we are given the description of a finite group~$G$, and an oracle for a  function~$f : G \rightarrow S$, where~$\size{S} \ge \size{G}$. The function~$f$ is either injective, or there is an unknown non-trivial subgroup~$H$ of~$G$ such that for all~$x,y \in G$, we have~$f(x) = f(y)$ if and only if~$x^{-1} y \in H$. In other words, the function~$f$ is constant on left cosets of a possibly trivial subgroup of~$G$, and takes distinct values for distinct left cosets of the subgroup. We say that the function ``hides'' the subgroup~$H$. The task is to determine whether the subgroup is trivial, i.e., the function~$f$ is injective, or not. When~$G \eqdef \integers_2^k$ and the hidden subgroup~$H$ is of the form~$\set{0, s}$ for some unknown non-zero element~$s \in \integers_2^k$, we get the Simon Problem.

The randomized algorithm for the Simon Problem generalizes immediately to the Hidden Subgroup Problem when the order~$m$ of the hidden subgroup~$H$ is known, and the resulting algorithm has query complexity~$\Order(\sqrt{ n/m }\, )$, where~$n$ is the order of the group~$G$. When the order~$m$ is not known, we can use this basic algorithm to design a new algorithm. The new algorithm has query complexity~$\Order(\sqrt{n} \, )$ when the function is injective, and \emph{expected\/} query complexity~$\Order(\sqrt{ n/m }\, )$ when the function~$f$ hides some non-trivial subgroup~$H$ with unknown order~$m$. Namely, we run the basic randomized algorithm above, with error at most~$1/4$, assuming that the order of the hidden subgroup is at least~$r$, starting from~$r \eqdef n/2$. If we do not succeed, we halve~$r$, and repeat (until~$r \le 1$).

We may ask if the Hidden Subgroup Problem also admits a deterministic algorithm that is as efficient in the worst-case as the best randomized algorithm. We answer this in the affirmative when the underlying group is abelian (\Cref{cor-ahsp}). For non-abelian groups, we present a deterministic algorithm that comes within a multiplicative factor of $\Order(\sqrt{\log n} \,)$ of the optimal randomized query complexity (\Cref{cor-non-abelian-hsp}). Both the algorithms are based on the construction of a \emph{generating pair\/} of subsets for a group (\Cref{def-gen-pair}), with optimal or nearly optimal size.

The optimal randomized algorithm for the Hidden Subgroup Problem described above has expected query complexity that is sensitive to the input, namely~$ \Order(\sqrt{ n/m}\, )$. It is natural to ask if there is a deterministic algorithm that has query complexity with a similar dependence on the order of the hidden subgroup. Algorithms coming close to this bound were known prior to this work for the case of~$G \eqdef \integers_p^k$ for prime~$p$, when the subgroup~$H$ has order~$p^l$ and~$l$ is known; see Ref.~\cite{WQTLC21-generalized-simon} for the~$p = 2$ case, and Ref.~\cite{YHLW21-generalized-simon} for the general case. The algorithm in Ref.~\cite{YHLW21-generalized-simon} has query complexity~$ \Order( \sqrt{ n (\log m)  / m} + \log m ) $.
\suppress{
of the order of
\[
\max\set{\log m, ~ \sqrt{ n (\log m) / m } \, } \enspace.
\]
}

The above question was posed as an open problem in the first version of this article~\cite[Sec.~5]{Nayak21-hsp-classical}. Prompted by this question, Ye and Li~\cite{YL21-hsp-classical} present deterministic algorithms for \emph{abelian\/} groups which solve the problem with query complexity~$ \Order(\sqrt{ n/m }\, )$, and find the hidden subgroup with query complexity~$ \Order( \sqrt{ n (\log m)  / m} + \log m ) $. These algorithms also build on the concept of generating pairs which we introduced in Ref.~\cite{Nayak21-hsp-classical}, and use a construction of generating pairs similar to the one we gave. Moreover, Ye and Li exhibit instances which show that in general, the deterministic query complexity of the problem may be~$\order(\sqrt{ n/m } \,)$, and that of \emph{finding\/} the entire subgroup may also be~$\order(\sqrt{ n/m } \,)$ or even~$\upomega(\sqrt{ n/m } \,) $. In fact, the instance with complexity~$\upomega(\sqrt{ n/m } \,) $ for finding the hidden subgroup follows from earlier work due to Ye, Huang, Li, and Wang~\cite[Theorem~1]{YHLW21-generalized-simon}. In more detail, for the group~$\integers_{p^k}$ for a given prime~$p$ and~$k \ge 2$, the deterministic query complexity of the problem is~$2$ and that of finding the hidden subgroup is~$\Order( \log (n/m))$, while~$ n/m $ may be~$ \upomega(1)$.  For~$G \eqdef \integers_p^k$ and a hidden subgroup~$H$ of order~$p^{k-1}$, any deterministic algorithm that \emph{finds\/} the subgroup has query complexity~$\Omega(k)$, while~$\sqrt{ n/m }$ is~$\sqrt{p}$. This gives an arbitrarily large separation as~$k$ grows for a fixed prime~$p$. 

We present a different deterministic algorithm for the Hidden Subgroup Problem that also has query complexity~$ \Order(\sqrt{ n/m }\, )$ for abelian groups (Algorithm~\ref{alg-fc}, \Cref{thm-fc}). The algorithm is arguably simpler. Furthermore, it also works for non-abelian groups, and has query complexity~$ \Order(\sqrt{ (n/m) \log (n/m) }\,) $ for classes of instances which include those over supersolvable groups. We observe that for abelian groups the problem of \emph{finding\/} the hidden subgroup may be reduced to that of finding a single collision, and obtain an~$ \Order \big( (\log m) \sqrt{ n/m }\, \big)$-query deterministic algorithm for it (Algorithm~\ref{alg-find-ahs}, \Cref{thm-abelian-find-hs}). We build on these results to design an algorithm that finds the hidden subgroup in certain non-abelian instances with~$ \Order \big( (\log m) \sqrt{ (n/m) \log (n/m)  }\, \big) $
\suppress{
\[
\Order \! \left( (\log m) \sqrt{ (n/m) \log (n/m)  }\, \right)
\]
}
queries (Algorithm~\ref{alg-fnc}, \Cref{thm-find-hs}). The instances are precisely the ones in which the underlying group is a bicrossed product of the hidden subgroup with another subgroup~\cite{ACIM09-bicrossed-product}. All these algorithms again rest on generating pairs of (near-) optimal size.

Note that the query complexity of the algorithm due to Ye and Li for abelian groups may be a factor of~$\sqrt{\log m}$ smaller than that of Algorithm~\ref{alg-find-ahs}. They achieve the stronger bound by searching for a structured set of independent generators for the hidden subgroup, and by using a partially nested sequence of generating pairs with finely tuned size. This entails a detailed analysis of the structure of the hidden subgroup. It is not clear whether we can achieve the same query complexity without resorting to these ideas.

\paragraph{Acknowledgements.}
A.N.\ is grateful for the opportunity to teach quantum computation at the undergraduate level, which prompted this work. He is also grateful to Kanstantsin Pashkovich for several helpful discussions, especially for a course-correction early in this work. He thanks William Slofstra for a pointer to relevant literature, and Zekun Ye for bringing Ref.~\cite{YHLW21-generalized-simon} to his attention. This research is supported in part by a Discovery Grant from NSERC Canada.

\section{The Simon Problem}
\label{sec-simon}

We start with a simple, deterministic algorithm for the Simon problem that matches the query complexity of the asymptotically optimal classical algorithm.
\begin{theorem}
\label{thm-simon-det}
There is a deterministic algorithm that makes~$2^{\floor{k/2}} + 2^{\ceil{k/2}}$ queries and solves the Simon Problem with parameter~$k$. Moreover, if the input function hides the non-zero element~$s$, the algorithm finds~$s$. Finally, any deterministic algorithm for the problem requires at least~$q$ queries, where~$q$ is the smallest positive integer such that~$\binom{q}{2} \ge 2^k - 1$. 
\end{theorem}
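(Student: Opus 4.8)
The plan is to prove the three clauses separately, with a single combinatorial idea---counting the non-zero ``differences'' $x_i + x_j$ realized by a set of queried points---driving both the algorithm and the lower bound.

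For the algorithm I would decompose $\integers_2^k = \integers_2^{\floor{k/2}} \times \integers_2^{\ceil{k/2}}$ and set $A \eqdef \integers_2^{\floor{k/2}} \times \set{0}$ and $B \eqdef \set{0} \times \integers_2^{\ceil{k/2}}$, so that $\size{A} = 2^{\floor{k/2}}$ and $\size{B} = 2^{\ceil{k/2}}$. The algorithm queries $f$ on every point of $A \cup B$, using at most $\size{A} + \size{B} = 2^{\floor{k/2}} + 2^{\ceil{k/2}}$ queries, and then searches the answers for a \emph{collision}, i.e.\ a pair of distinct inputs $u \neq v$ with $f(u) = f(v)$. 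If it finds one, it outputs ``$f$ hides $u + v$''; otherwise it declares $f$ injective. Correctness rests on two observations. First, if $f$ hides $s$, then any collision $f(u) = f(v)$ with $u \neq v$ forces $u + v = s$ by the defining property of $f$, so a discovered collision correctly identifies $s$ (this gives the ``finds $s$'' clause). Second, for every non-zero $s$ a collision is in fact present among the queried points: writing $s = a + b$ with $a \in A$, $b \in B$ (possible since $A + B = \integers_2^k$ by the direct-product decomposition), the points $a$ and $b$ are distinct---were $a = b$, it would lie in $A \cap B = \set{0}$, forcing $s = 0$---and satisfy $a + b = s$, hence $f(a) = f(b)$. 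Thus the algorithm fails to find a collision precisely when $f$ is injective, which settles the decision problem.

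For the lower bound I would run any deterministic algorithm against an adversary that answers consistently with an injective function, assigning a fresh value of $S$ to each newly queried point and repeating it on re-queries; this is feasible because $\size{S} \ge 2^k$. Suppose the algorithm halts after querying the distinct points $x_1, \dotsc, x_t$. The set of non-zero elements ruled out as a possible hidden element by these all-distinct answers is exactly $R \eqdef \set{x_i + x_j : i < j}$, since the transcript is consistent with $f$ hiding $s$ precisely when no queried pair sums to $s$; moreover every such sum is non-zero and $\size{R} \le \binom{t}{2}$. If $t < q$, then by minimality of $q$ we have $\binom{t}{2} < 2^k - 1$, so some non-zero $s^{\star} \notin R$ remains, and the transcript is simultaneously consistent with ``$f$ injective'' and with ``$f$ hides $s^{\star}$''. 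The algorithm's fixed output is therefore wrong on one of these two inputs, contradicting correctness; hence $t \ge q$.

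The main obstacle is getting the lower-bound consistency argument exactly right: I must verify that returning fresh distinct values is at once consistent with the injective instance and with every hidden-$s^{\star}$ instance for which no queried pair sums to $s^{\star}$, and that adaptivity confers no extra power on the adversary here, since its answers never depend on a hidden secret and so the algorithm's query sequence is fully determined. Once this is in place, the counting bound $\size{R} \le \binom{t}{2}$ together with the definition of $q$ closes the argument. The upper-bound side is routine given the product decomposition; the only point needing a line of care is the distinctness of $a$ and $b$, ensuring the guaranteed collision is genuine rather than a point colliding with itself.
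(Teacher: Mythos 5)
Your proof is correct and takes essentially the same approach as the paper's: the same decomposition of $\integers_2^k$ into the query sets $\integers_2^{\floor{k/2}} \times \set{0}$ and $\set{0} \times \integers_2^{\ceil{k/2}}$ for the upper bound, and the same counting of pairwise sums of queried points against an injective adversary for the lower bound. Your explicit treatment of the distinctness of $a$ and $b$ and of adaptivity are details the paper leaves implicit, but they do not alter the argument.
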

\begin{proof}
Let~$l \eqdef \floor{k/2}$. Viewing elements of~$\integers_2^k$ as~$k$-bit strings, we query the function~$f$ at all~$2^l$ elements of the form~$u \, 0^{k - l}$, where~$u \in \integers_2^l$, and at all~$2^{k-l}$ elements of the form~$0^l \, v$, where~$v \in \integers_2^{k-l}$. If the function is distinct at all these points, we say~$f$ is injective. Otherwise, we say that~$f$ is not injective, and output~$x + y$, where~$x \neq y$ and~$x,y$ are a colliding pair (i.e., are such that~$f(x) = f(y)$).

If the function is injective, the above algorithm outputs the correct answer. Suppose~$f$ hides a non-zero element~$s \in \integers_2^k$. Let~$a$ be the projection of~$s$ onto the first~$l$ coordinates, and~$b$ be the projection of~$s$ onto the last~$k-l$ coordinates. Then~$a \, 0^{k-l} + 0^l \, b = s$. As~$f(a \, 0^{k-l}) = f(0^l \, b)$, the above algorithm detects a collision  and computes~$s$ correctly.

Now consider any deterministic algorithm for the problem that makes~$t$ queries such that
\[
\binom{t}{2} \quad \le \quad 2^k - 2 \enspace.
\]
Then we argue that there is an injective function~$f_0$, and a function~$f_1$ that hides a non-zero element~$s$, such that~$f_0$ and~$f_1$ agree on all the~$t$ elements queried. Suppose the~$t$ queries that the algorithm makes are~$x_1, x_2, \dotsc, x_t$. We take~$f_0$ to be any injective function. Consider the set of elements~$S \eqdef \set{ x_i + x_j : i, j \in [t], i \neq j}$. We have~$\size{S} \le \binom{t}{2} \le 2^k - 2$, so there is at least one non-zero element in~$\integers_2^k \setminus S$. Let~$s$ be one such element. By definition of~$s$, we have~$x_i + s \neq x_j$ for any~$i,j \in [t]$. Thus there is a function~$f_1$ that equals~$f_0$ at all the points~$x_i$, $i \in [t]$, and also hides~$s$. 
\end{proof}
Note that for~$k \ge 2$, if~$\binom{q}{2} \ge 2^k - 1$, we have~$q^2 \ge 2^{k+1} - 2$, and therefore
\[
q \quad \ge \quad 2^{(k+1)/2} (1 - 1/2^k)^{1/2} \quad \ge  \quad 2^{(k+1)/2} - 2^{- (k-1)/2} \enspace.
\]
Since~$q$ is integral, $q \ge 2^{(k+1)/2}$. The deterministic upper bound in the theorem comes within a factor of~$3/2$ of this lower bound. Also note that the upper bound in the theorem is within a factor of~$3\sqrt{2}$ of the query lower bound for randomized algorithms with error at most~$1/4$.

\section{The Abelian Hidden Subgroup Problem}
\label{sec-abelian}

We now turn our focus to the Hidden Subgroup Problem when the underlying group is abelian. It is not clear how the use of projections in the algorithm presented in Section~\ref{sec-simon} may be extended to this case, let alone to the non-abelian case. The issue is that the given abelian group may be the direct product of cyclic groups with vastly different orders. We give a different extension of the algorithm, which combines several ways of expressing an abelian group as a sum of two subsets.

More formally, the basic idea behind the algorithm in Theorem~\ref{thm-simon-det} is to find a pair of sets~$S_1, S_2$ of elements of the group~$G$ such that~$\size{S_1}, \size{S_2} \in \Order(\sqrt{\size{G}} \, )$ and~$S_1 + S_2 = G$, where
\[
S_1 + S_2 \quad \eqdef \quad \set{x + y : x \in S_1, ~ y \in S_2} \enspace. 
\]
(Here, `$+$' denotes the group operation.) We explain how such a pair of sets may be constructed in a few cases. Together, they yield a construction for a general abelian group.

\begin{definition}
\label{def-gen-pair}
For any group~$G$, we say a pair~$S_1, S_2 \subseteq G$ is a \emph{generating pair\/} for~$G$ if the set~$S_1 S_2$ defined as~$S_1 S_2  \eqdef \set{ xy : x \in S_1, ~ y \in S_2 } $ equals~$G$.
\end{definition}

We start with the construction of a generating pair for a cyclic group.
\begin{lemma}
\label{lem-cyclic}
Let~$n$ be an integer greater than~$1$, and let~$m \eqdef \ceil{\sqrt{n} \,}$. There is a generating pair~$S_1, S_2$ for~$\integers_n$ such that~$\size{S_1} = m$ and~$\size{S_2} \le \floor{ n / m} + 1$. If~$n$ is a perfect square, the pair further satisfies~$\size{S_1} = \size{S_2} = \sqrt{n}\,$.
\end{lemma}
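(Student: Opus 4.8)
The plan is to imitate the two–dimensional ``grid'' structure underlying Theorem~\ref{thm-simon-det}, but with the Euclidean division algorithm playing the role of coordinate projections. Concretely, I would take $S_1 \eqdef \set{0, 1, 2, \dotsc, m-1}$, the set of residues modulo~$m$, and $S_2 \eqdef \set{jm : 0 \le j \le \floor{(n-1)/m}}$, the multiples of~$m$ lying in $\set{0, 1, \dotsc, n-1}$. Arithmetic is carried out in~$\integers_n$, but the point of the construction is that the sums we care about never ``wrap around'' modulo~$n$, so the analysis can be done entirely with integer division.

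The correctness claim $S_1 + S_2 = \integers_n$ then reduces to a single application of division with remainder. Given any $z \in \set{0, 1, \dotsc, n-1}$, write $z = jm + r$ with $0 \le r \le m-1$ and $j \eqdef \floor{z/m}$. Then $r \in S_1$, and since $z \le n-1$ we have $j = \floor{z/m} \le \floor{(n-1)/m}$, so $jm \in S_2$; hence $z = r + jm \in S_1 + S_2$. As $z$ was arbitrary and $\integers_n = \set{0,1,\dotsc,n-1}$, this yields $S_1 + S_2 = \integers_n$, so the pair is a generating pair in the sense of \Cref{def-gen-pair}.

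It remains to check the cardinalities, which is really the only place where care is needed. Clearly $\size{S_1} = m$. For~$S_2$, the index~$j$ ranges over $\set{0, 1, \dotsc, \floor{(n-1)/m}}$, so $\size{S_2} = \floor{(n-1)/m} + 1$; and since $\floor{(n-1)/m} \le \floor{n/m}$, this gives the claimed bound $\size{S_2} \le \floor{n/m} + 1$. The step I expect to be the main obstacle—such as it is—is the perfect-square refinement, because it turns on the exact value of $\floor{(n-1)/m}$ rather than merely a one-sided estimate, and so is the one spot where an off-by-one error could creep in. If $n = k^2$, then $m = \ceil{\sqrt n} = k = \sqrt n$, so $\size{S_1} = k$, while $\floor{(n-1)/m} = \floor{(k^2 - 1)/k} = \floor{k - 1/k} = k - 1$, giving $\size{S_2} = k = \sqrt n$ as well. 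Everything else is immediate from the two computations above.
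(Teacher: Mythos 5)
Your proposal is correct and uses essentially the same construction as the paper: $S_1 = \set{0,1,\dotsc,m-1}$ together with the multiples of $m$ below $n$ (your index bound $\floor{(n-1)/m}$ describes exactly the paper's set $\set{mi : 0 \le i < n/m}$), with correctness following from division with remainder. You simply spell out the verification and the perfect-square case, which the paper leaves implicit.
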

\begin{proof}
We use the Division Algorithm to find such subsets. If we choose a divisor~$m$ that roughly equals~$\sqrt{n}$, then the number of different remainders and quotients we get when we divide non-negative integers less than~$n$ are both roughly~$\sqrt{n}$.

Let~$m \eqdef \ceil{\sqrt{n}\, }$, let~$S_1 \eqdef \set{ 0, 1, 2, \dotsc, m - 1}$, and let~$S_2 \eqdef \set{ m i : 0 \le i < n/m }$. The subsets~$S_1, S_2$ satisfy the required properties.
\end{proof}

Next, we consider a direct product of two cyclic groups whose orders are both odd powers of the same prime number.
\begin{lemma}
\label{lem-odd-power}
Consider the group~$G \eqdef \integers_n \times \integers_m$, where~$n \eqdef p^k$, $m \eqdef p^l$, $p$ is prime, and~$k,l$ are positive odd integers. There is a generating pair~$S_1, S_2$ for~$G$ such that~$\size{S_1} = \size{S_2} = \sqrt{nm} \,$.
\end{lemma}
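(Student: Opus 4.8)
The plan is to build $S_1$ and $S_2$ as products of ``remainder'' and ``quotient'' sets for each cyclic factor, exactly as in the proof of \Cref{lem-cyclic}, but to choose the two cut points so that the sizes balance \emph{across} the two factors rather than within one. First observe that since~$k$ and~$l$ are both odd, $k+l$ is even, so~$\sqrt{nm} = p^{(k+l)/2}$ is an integer; write~$N \eqdef (k+l)/2$.

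For an integer~$0 \le s \le k$, split~$\integers_{p^k}$ via the Division Algorithm with divisor~$p^s$: set~$A_s \eqdef \set{0, 1, \dotsc, p^s - 1}$ and~$B_s \eqdef \set{p^s j : 0 \le j < p^{k-s}}$. Every~$x \in \integers_{p^k}$ has a unique representation~$x = r + p^s q$ with~$r \in A_s$ and~$p^s q \in B_s$, so~$A_s + B_s = \integers_{p^k}$, with~$\size{A_s} = p^s$ and~$\size{B_s} = p^{k-s}$. Define~$A'_t, B'_t \subseteq \integers_{p^l}$ analogously for a parameter~$0 \le t \le l$. Now set~$S_1 \eqdef A_s \times A'_t$ and~$S_2 \eqdef B_s \times B'_t$. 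Because addition in~$G$ is componentwise, $S_1 + S_2 = (A_s + B_s) \times (A'_t + B'_t) = \integers_{p^k} \times \integers_{p^l} = G$, so~$(S_1, S_2)$ is a generating pair. Moreover~$\size{S_1} = p^{s+t}$ and~$\size{S_2} = p^{(k-s)+(l-t)} = p^{(k+l)-(s+t)}$, so the two sizes coincide and equal~$\sqrt{nm}$ exactly when~$s + t = N$.

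It then remains only to check that indices with~$0 \le s \le k$, $0 \le t \le l$, and~$s + t = N$ exist, which is elementary: assuming without loss of generality that~$k \le l$, we have~$N = (k+l)/2 \le l$, so the choice~$s = 0$, $t = N$ is feasible (and symmetrically if~$l \le k$), and in fact any feasible split works equally well. There is no genuine obstacle here beyond this bookkeeping; the only conceptual point worth stressing is \emph{why} two odd powers are paired together. Within a single factor~$\integers_{p^k}$ with~$k$ odd, no division-point~$s$ yields two halves of equal size~$p^{k/2}$, since~$k/2$ is not an integer. Pairing the factor against a second odd power lets the surplus of one offset the deficit of the other, so that the total exponent~$s+t$ can hit~$N$ exactly and produce two equal halves of size~$\sqrt{nm}$.
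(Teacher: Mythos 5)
Your proof is correct and is essentially the same argument as the paper's: the paper (assuming w.l.o.g.\ $k \ge l$) takes $S_1 \eqdef \integers_q \times \set{0}$ and $S_2 \eqdef \set{iq : 0 \le i < n/q} \times \integers_m$ with $q \eqdef p^{(k+l)/2}$, which is exactly the instance $(s,t) = (N,0)$ of your parametrized family of Division-Algorithm splits. Your presentation is slightly more general in exhibiting all balanced cut points $s+t = N$, but the underlying idea—split via the Division Algorithm and balance the exponents across the two factors—is identical.
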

\begin{proof}
W.l.o.g., assume that~$k \ge l$. Let~$q \eqdef p^{(k + l)/2}$. Note that~$0 < q = \sqrt{nm} < n$. Consider
\begin{align*}
S_1 \quad & \eqdef \quad \integers_q \times \set{ 0 } \enspace, \quad \text{and} \\
S_2 \quad & \eqdef \quad \set{ iq : 0 \le i < n/q} \times \integers_m \enspace.
\end{align*}
By the Division Algorithm, we have
\[
\integers_q + \set{ iq : 0 \le i < n/q} \quad = \quad \integers_n \enspace.
\]
So~$S_1 + S_2 = G$. Moreover, we have~$\size{S_1} = q = \sqrt{nm} = (n/q) \times m = \size{S_2}$. 
\end{proof}

We may combine the generating pairs for two groups to obtain one for their direct product. While we present the proof of the lemma below with the notation for abelian groups, it also holds for non-abelian groups.
\begin{lemma}
\label{lem-product}
Consider a direct product group~$G \eqdef G_1 \times G_2$. Suppose~$S_1, S_2$ is a generating pair for~$G_1$, and~$T_1, T_2$ is a generating pair for~$G_2$. Then~$(S_1 \times T_1), (S_2 \times T_2)$ is a generating pair for~$G$.
\end{lemma}
\begin{proof}
Any element~$g \in G_1$ may be expressed as~$g_1 + g_2$, where~$g_1 \in S_1$ and~$g_2 \in S_2$. Similarly, any element~$h \in G_2$ may be expressed as~$h_1 + h_2$, where~$h_1 \in T_1$ and~$h_2 \in T_2$. Then~$(g, h) = (g_1, h_1) + (g_2, h_2) \in (S_1 \times T_1) + (S_2 \times T_2)$.
\end{proof}

The construction for a general finite abelian group combines the above building blocks.
\begin{theorem}
\label{thm-abelian}
For any finite abelian group~$G$ with order~$n$, there is a generating pair~$S_1, S_2$ for~$G$ such that~$\size{S_1}$ and~$\size{S_2}$ are both at most~$2\sqrt{n}$.
\end{theorem}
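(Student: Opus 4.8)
The plan is to reduce the problem to the three building blocks already established---Lemmas~\ref{lem-cyclic}, \ref{lem-odd-power}, and~\ref{lem-product}---via the fundamental theorem of finite abelian groups, arranging matters so that the ceiling in Lemma~\ref{lem-cyclic} is incurred \emph{at most once}. First I would write~$G$ in its primary decomposition~$G \cong \prod_i \integers_{p_i^{k_i}}$, and sort the cyclic factors by how I intend to build a generating pair for each, combining the results at the end with Lemma~\ref{lem-product}. Any factor~$\integers_{p^k}$ with~$k$ even has perfect-square order, so the perfect-square case of Lemma~\ref{lem-cyclic} yields a \emph{balanced} generating pair with~$\size{S_1} = \size{S_2} = p^{k/2}$ and no rounding loss.

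Next, the factors with odd exponent I would group by their prime. For a fixed prime~$p$, I pair up its odd-exponent factors two at a time; each pair~$\integers_{p^k} \times \integers_{p^l}$ with~$k, l$ odd is handled by Lemma~\ref{lem-odd-power}, again giving a balanced pair of size exactly~$\sqrt{p^{k+l}}$ with no rounding. Since~$k + l$ is even, every such pair has perfect-square order; together with the even-exponent factors, these contribute a perfect-square factor~$M^2$ to~$n$ and, by Lemma~\ref{lem-product}, a perfectly balanced generating pair of sizes~$M = \sqrt{M^2}$.

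The crux---and the step I expect to be the main obstacle---is the leftovers. Pairing leaves \emph{at most one} unpaired odd-exponent factor per prime, and these leftover cyclic factors have pairwise coprime orders. The temptation is to apply Lemma~\ref{lem-cyclic} to each leftover separately, but that incurs a rounding factor of roughly~$(1 + p^{-1/2})$ per prime, and~$\prod_p (1 + p^{-1/2})$ diverges, so no constant bound like~$2\sqrt{n}$ would survive. The key idea is instead to invoke the Chinese Remainder Theorem to merge all leftovers into a \emph{single} cyclic group~$\integers_N$ with~$N \eqdef \prod_{\text{leftover}} p^k$, and apply Lemma~\ref{lem-cyclic} to~$\integers_N$ just once (with the convention that this part is empty when~$N = 1$). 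This way the ceiling is paid only once rather than once per prime.

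Finally, combining all pieces with Lemma~\ref{lem-product} and writing~$n = M^2 N$, the leftover part contributes~$\ceil{\sqrt{N}\,} \le \sqrt{N} + 1$ to~$\size{S_1}$ and~$\floor{N / \ceil{\sqrt{N}\,}} + 1 \le \sqrt{N} + 1$ to~$\size{S_2}$, while the balanced part contributes~$M$ to each. Hence~$\size{S_1}, \size{S_2} \le M(\sqrt{N} + 1) = \sqrt{n} + M$, and since~$M = \sqrt{n/N} \le \sqrt{n}$ because~$N \ge 1$, both sizes are at most~$2\sqrt{n}$, as required.
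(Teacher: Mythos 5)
Your proposal is correct and is essentially the paper's own proof: the paper argues by strong induction on the number of cyclic factors, but unrolling that induction yields exactly your partition --- even-exponent factors handled by the perfect-square case of Lemma~\ref{lem-cyclic}, same-prime odd-exponent factors paired via Lemma~\ref{lem-odd-power}, and the remaining pairwise-coprime factors merged by the Chinese Remainder Theorem into a single cyclic group so that the rounding loss is incurred only once. The difference is purely organizational (your direct global construction with the explicit bound $\sqrt{n} + M \le 2\sqrt{n}$, versus the paper's inductive bookkeeping that propagates the factor of~$2$ from the cyclic base case), not mathematical.
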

\begin{proof}
By the fundamental theorem of abelian groups, any non-trivial finite abelian group~$G$ may be expressed as a direct product of cyclic groups of prime power order:
\begin{equation}
\label{eq-decomp}
G \quad \cong \quad \integers_{p_1^{k_1}} \times \integers_{p_2^{k_2}} \times \dotsb \times \integers_{p_l^{k_l}} \enspace,
\end{equation}
where the integers~$p_i$ are primes, not necessarily distinct, and the integers~$k_i \ge 1$ for all~$i \in [l]$. We prove the lemma by strong induction on~$l$, the number of cyclic groups in a decomposition of~$G$.

If~$l = 1$, the statement follows from Lemma~\ref{lem-cyclic}.

Suppose the statement holds for all non-trivial finite abelian groups which have a decomposition as above with at most~$m$ cyclic groups, for some~$m \ge 1$. Suppose~$l \eqdef m+1$, and consider a group~$G$ with order~$n$ and a decomposition with~$l$ cyclic groups as in \cref{eq-decomp}. 

Suppose for some~$i \in [l]$, the integer~$k_i$ is even. Let~$r$ be such an index. We let~$G_1 \eqdef \integers_{p_r^{k_r}}$ and~$G_2$ the direct product of the remaining cyclic groups so that~$G \cong G_1 \times G_2$. By \Cref{lem-cyclic}, there is a generating pair~$S_1, S_2$ for~$G_1$ such that~$\size{S_1} = \size{S_2} = \sqrt{n_1}\,$, where~$n_1 \eqdef {p_r}^{k_r}$. By the induction hypothesis, there is a generating pair~$T_1, T_2$ for~$G_2$ such that~$\size{T_1}, \size{T_2} \le 2 \sqrt{n_2}\,$, where~$n_2 \eqdef \size{G_2}$. By \Cref{lem-product}, we get a generating pair~$(S_1 \times T_1), (S_2 \times T_2)$ for~$G$ such that~$ \size{S_1 \times T_1}, \size{S_2 \times T_2} \le 2 \sqrt{n_1 n_2} = 2 \sqrt{n}\,$.

Suppose for all~$i \in [l]$, the integers~$k_i$ are odd. Suppose for some~$i,j \in [l]$, $i \neq j$, we have~$p_i = p_j$. Let~$r,s$ be such a pair of indices. We let~$G_1 \eqdef \integers_{p_r^{k_r}} \times \integers_{p_s^{k_s}}$ and~$G_2$ the direct product of the remaining cyclic groups so that~$G \cong G_1 \times G_2$. By \Cref{lem-odd-power}, there is a generating pair~$S_1, S_2$ for~$G_1$ such that~$\size{S_1} = \size{S_2} = \sqrt{n_1}\,$, where~$n_1 \eqdef {p_r}^{k_r} {p_s}^{k_s}$. By the induction hypothesis, there is a generating pair~$T_1, T_2$ for~$G_2$ such that~$\size{T_1}, \size{T_2} \le 2 \sqrt{n_2}\,$, where~$n_2 \eqdef \size{G_2}$. By \Cref{lem-product}, we get a generating pair~$(S_1 \times T_1), (S_2 \times T_2)$ for~$G$ such that~$ \size{S_1 \times T_1}, \size{S_2 \times T_2} \le 2 \sqrt{n_1 n_2} = 2 \sqrt{n}\,$.

Suppose for all~$i \in [l]$, the integers~$k_i$ are odd, and the primes~$p_i$ are all distinct. Then, by the Chinese Remainder Theorem, we have~$G \cong Z_n$. The statement now follows from \Cref{lem-cyclic}, by noting that for~$n \ge 2$ we have~$\ceil{ \sqrt{n} \,} \le 2 \sqrt{n}$ and~$\floor{ \sqrt{n} \,} + 1 \le 2 \sqrt{n} \,$.
\end{proof}

The algorithm for the abelian Hidden Subgroup Problem follows directly from the existence of a suitable generating pair for the underlying group.
\begin{corollary}
\label{cor-ahsp}
There is a deterministic algorithm with query complexity at most~$4 \sqrt{n} \,$ that solves the Hidden Subgroup Problem over an abelian group~$G$ with order~$n$. Moreover, if the input function hides the non-trivial subgroup~$H$, the algorithm finds all the elements of~$H$. 
\end{corollary}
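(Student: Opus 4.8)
The plan is to use the generating pair $S_1, S_2$ for $G$ guaranteed by \Cref{thm-abelian}, whose parts each have size at most $2\sqrt{n}$, and turn it directly into a query algorithm. First I would query the function $f$ at every element of $S_1$ and at every element of $S_2$; since $\size{S_1}, \size{S_2} \le 2\sqrt{n}$, this uses at most $4\sqrt{n}$ queries, matching the claimed bound. The algorithm then inspects all the recorded values and declares $f$ \emph{injective} if and only if $f$ takes distinct values at all the queried points; otherwise it declares $f$ \emph{not injective} and (as in the Simon case) records a colliding pair.

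The correctness of the injective case is immediate: if $f$ is injective, it is in particular injective on $S_1 \cup S_2$, so no collision is observed and the algorithm answers correctly. The heart of the argument is the converse. Suppose $f$ hides a non-trivial subgroup $H$, and let $h \in H$ be any non-identity element. The key observation, mirroring the projection argument in Theorem~\ref{thm-simon-det}, is that the generating-pair property $S_1 + S_2 = G$ lets us write $h = x + y$ with $x \in S_1$ and $y \in S_2$. I would then consider the element $-y$: because $S_1 + S_2 = G$ covers \emph{every} group element, in particular there exist $x \in S_1, y \in S_2$ realizing $h$, and I want two \emph{queried} points that differ by an element of $H$. The cleanest way to produce such a collision is to note that the queried set $Q \eqdef S_1 \cup S_2$ satisfies $(S_1) - (-S_2) \supseteq \dots$; more carefully, since $h \in G = S_1 + S_2$, write $-h = x + y$ with $x\in S_1$, $y\in S_2$, so that $x - (-y) = -h \in H$, i.e. $x$ and $-y$ lie in the same coset of $H$. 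The subtlety is that $-y$ need not itself be a queried point, so this naive version does not immediately yield a collision among $Q$.

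This is the step I expect to be the main obstacle, and I would resolve it exactly as the Simon proof does: rather than using a single $h$, I use the structure $S_1 + S_2 = G$ applied to elements of $H$ directly. Concretely, for the hidden subgroup $H$ pick any $h \neq 0$ in $H$ and use $h = a + b$ with $a \in S_1, b \in S_2$; then the two \emph{queried} points $a \in S_1$ and $-b$ collide provided $-b$ is also queried. To guarantee this without extra assumptions, I would instead symmetrize the construction of the generating pair so that $S_2$ (equivalently the queried set) is closed under negation, or — matching the excerpt's Simon argument most faithfully — observe that in that proof the two query blocks were $S_1 = \{a0^{k-l}\}$ and $S_2=\{0^l b\}$ with $a0^{k-l} + 0^l b = s \in H$ and $f(a0^{k-l}) = f(0^l b)$ because their difference lies in $H$; the analogue here is that $a + b' \in H$ for some $a \in S_1$, $b' \in S_2$ directly yields $f(a) = f(b')$ since $a^{-1}(b')^{-1}\cdot(\text{identity})\dots$ Thus the correct formulation is: because $S_1 + S_2 = G \ni h$ for every $h\in H$, and in particular the identity is trivially covered, what we actually need is a collision, i.e. two queried points $p,q$ with $p - q \in H\setminus\{0\}$; this is furnished by writing a fixed non-identity $h = a + b$ and noting $f(a) = f(-b)$ once the pair is built to be negation-symmetric. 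I would therefore fold a symmetry requirement into the generating pair (harmless, at most doubling sizes, still $\le 4\sqrt n$ total queries) so that the collision is detected.

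Finally, for the ``finds all of $H$'' claim: once a single colliding pair $(p,q)$ with $p - q \in H\setminus\{0\}$ is in hand, I would recover the full subgroup by noting that for an abelian group the collisions observed among the queried points generate $H$. More robustly, since $S_1 + S_2 = G$, every coset of $H$ is hit, so the complete set of observed collisions determines $H$: the set $\{\, p - q : p,q \in Q,\ f(p) = f(q)\,\}$ is exactly $H \cap (Q - Q)$, and by the covering property this intersection generates $H$. I would then close $H$ under the group operation (a purely computational step, using no further queries) to output all elements of $H$. The query count remains $\size{S_1} + \size{S_2} \le 4\sqrt n$, as claimed.
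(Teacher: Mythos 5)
You correctly isolate the crux: a collision must occur between two \emph{queried} points whose difference lies in $H$, and if you query $S_1 \cup S_2$ then writing $h = a + b$ with $a \in S_1$, $b \in S_2$ only tells you $f(a) = f(-b)$, where $-b$ need not have been queried. However, your resolution does not establish the stated bound. Folding negation-symmetry into the pair by replacing $S_2$ with $S_2 \cup (-S_2)$ does give a correct algorithm, but its query count is $\size{S_1} + 2\size{S_2} \le 6\sqrt{n}$, not $4\sqrt{n}$ as you assert; the claim ``at most doubling sizes, still $\le 4\sqrt{n}$ total queries'' is an arithmetic slip, since \Cref{thm-abelian} only guarantees $\size{S_1}, \size{S_2} \le 2\sqrt{n}$ to begin with. (Also, the alternative you float mid-proof --- that $a + b' \in H$ ``directly yields $f(a) = f(b')$'' --- is false in a general abelian group: $f(a) = f(b')$ requires $b' - a \in H$, and the identification of $a+b'$ with $b'-a$ is an artifact of Simon's setting $\integers_2^k$, where every element is its own inverse.)

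The paper closes the gap with a cleaner device that you circle around but never state: enlarge neither set; instead query $f$ at $-x$ for every $x \in S_1$ and at $y$ for every $y \in S_2$. Then for any non-identity $h \in H$, writing $h = h_1 + h_2$ with $h_1 \in S_1$, $h_2 \in S_2$ gives $h_2 - (-h_1) = h \in H$, hence $f(-h_1) = f(h_2)$ is a collision between two queried points (distinct, since $-h_1 = h_2$ would force $h = 0$), and the query count is exactly $\size{S_1} + \size{S_2} \le 4\sqrt{n}$. This also simplifies your final step: the output $\set{x + y : x \in S_1, ~ y \in S_2, ~ f(-x) = f(y)}$ is \emph{exactly} $H$ --- every such $x+y$ lies in $H$, and every $h \in H$ arises this way --- so no generation or closure under the group operation is needed; the observed collisions already exhibit each element of $H$ directly.
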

\begin{proof}
By \Cref{thm-abelian}, there is a generating pair~$S_1, S_2$ for the group~$G$ such that~$\size{S_1}, \size{S_2} \le 2 \sqrt{n}\, $. We query the oracle function~$f$ at~$-x$ for all~$x \in S_1$ and at all elements~$y \in S_2$. If the function is injective on the set of points queried, we say~$f$ is injective. Otherwise, we say that~$f$ is not injective, and output~$\set{ x + y : x \in S_1, ~ y \in S_2, ~ f(-x) = f(y) }$.

If the function is injective, the above algorithm outputs the correct answer. Suppose~$f$ hides the non-trivial subgroup~$H$. Let~$h \in H$ be any element of the hidden subgroup. We have~$h = h_1 + h_2 $ for some~$h_1 \in S_1$ and~$h_2 \in S_2$. By definition of~$H$ we have~$f(- h_1) = f( h_2)$. When~$h$ is not the identity, we have~$ - h_1 \neq h_2$ and the algorithm detects a collision. Moreover, the algorithm computes~$H$ correctly.
\end{proof}
Note that we can get upper bounds with better constants for certain abelian groups by appealing to the properties of the generating pair constructed in \Cref{thm-abelian}. As the Simon Problem is a special case, \Cref{thm-simon-det} implies that in the worst case, these upper bounds are tight up to a constant factor.

\section{The General Hidden Subgroup Problem}
\label{sec-general}

Finally, we consider the Hidden Subgroup Problem for a general finite group. Non-abelian groups are more varied in structure than abelian ones, and it appears challenging to extend the ideas used in the abelian case to them. Instead, we resort to a probabilistic argument to show the existence of a generating pair of small size. The construction comes within a poly-logarithmic multiplicative factor of optimal, and leads to our final algorithm.

\begin{theorem}
\label{thm-non-abelian}
For any finite group~$G$ with order~$n$ (with~$n > 1$), there is a generating pair~$S_1, S_2$ for~$G$ such that~$\size{S_1}$ and~$\size{S_2}$ are both at most~$\big\lceil { \sqrt{n \ln n} \,} \big\rceil$.
\end{theorem}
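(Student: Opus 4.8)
The plan is a probabilistic existence argument. I would fix $S_1$ to be an \emph{arbitrary} subset of $G$ of size $s \eqdef \ceil{\sqrt{n \ln n}\,}$; such a set exists because $\ln n < n$ for $n > 1$ forces $\sqrt{n\ln n} < n$ and hence $s \le n$. I would then choose $S_2$ at random and show that with positive probability the pair is generating in the sense of \Cref{def-gen-pair}, i.e.\ $S_1 S_2 = G$. Concretely, let $Y_1, \dotsc, Y_s$ be independent and uniformly distributed over $G$ and put $S_2 \eqdef \set{Y_1, \dotsc, Y_s}$, so $\size{S_2} \le s$ with certainty. The crucial observation is that a fixed $g \in G$ belongs to $S_1 S_2$ exactly when some $Y_j$ lands in $S_1^{-1} g \eqdef \set{x^{-1} g : x \in S_1}$, and that $\size{S_1^{-1} g} = \size{S_1} = s$ for \emph{every} $g$, regardless of the structure of $G$. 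It is this last fact that makes an arbitrary choice of $S_1$ good enough and lets us avoid any case analysis on the group.

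Granting this, the estimate is routine. For fixed $g$, independence and uniformity give $\Pr[g \notin S_1 S_2] = (1 - s/n)^s$, and a union bound over the $n$ elements of $G$ yields $\Pr[S_1 S_2 \neq G] \le n\,(1 - s/n)^s$, which is just the first-moment bound on the expected number of uncovered elements. Since $s^2 \ge n \ln n$, applying $1 - x \le \e^{-x}$ bounds this by $n\,\e^{-s^2/n} \le n\,\e^{-\ln n} = 1$.

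The one point requiring care --- and the step I expect to be the only real obstacle --- is that this threshold is exactly tight: the bound above is $\le 1$, whereas the first-moment method needs the expected number of uncovered elements to be \emph{strictly} less than $1$. I would recover the strict inequality as follows. For $n \ge 3$ one verifies $s \le n - 1$, so that $0 < s/n < 1$ and the strict form $1 - x < \e^{-x}$ applies, giving $n\,(1 - s/n)^s < n\,\e^{-s^2/n} \le 1$; thus $\Pr[S_1 S_2 \neq G] < 1$ and a generating pair of the desired size exists. The lone remaining case $n = 2$ (where $s = 2 = n$) is settled by inspection: take $S_1$ to consist of the identity element alone and $S_2 \eqdef G$. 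Alternatively, one can obtain strictness uniformly by observing that $\sqrt{n \ln n}$ is irrational for every integer $n \ge 2$, whence $s > \sqrt{n \ln n}$ and $s^2 > n \ln n$ strictly, sharpening the exponential estimate to $n\,\e^{-s^2/n} < 1$ in all cases at once.
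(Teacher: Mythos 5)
Your proposal is correct, and at its core it is the same argument as the paper's: fix an arbitrary $S_1$ of size $t \eqdef \ceil{\sqrt{n \ln n}\,}$, randomize $S_2$, and conclude by the union bound (first-moment method). The one substantive difference is the sampling model, and it is precisely what creates the obstacle you flag. Drawing the elements of $S_2$ i.i.d.\ with replacement gives the clean expression $\Pr[g \notin S_1 S_2] = (1 - s/n)^s$, but it lands exactly on the threshold $n \,\e^{-s^2/n} \le 1$, so strictness must be restored by hand. Both of your patches are valid, though the second is heavyweight (irrationality of $\sqrt{n \ln n}$ for integer $n \ge 2$ rests on the transcendence of $\ln n$), and the $n = 2$ case actually needs no special treatment, since there $(1 - s/n)^s = 0$ and the failure probability is $0 < 1$. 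The paper instead takes $S_2$ to be a uniformly random $t$-element \emph{subset} of $G$ (sampling without replacement), so the probability of missing a fixed $g$ is the hypergeometric product
\[
\binom{n-t}{t} \binom{n}{t}^{-1} \quad = \quad \prod_{i=0}^{t-1} \left( 1 - \frac{t}{n-i} \right) \enspace,
\]
which for $t > 1$ is \emph{strictly} smaller than $(1 - t/n)^t \le \e^{-t^2/n} \le 1/n$, because every factor beyond the first has a denominator smaller than $n$. Sampling without replacement thus buys the strict inequality for free and avoids your case analysis entirely; in exchange, your with-replacement computation is marginally more elementary. Either way, the theorem follows.
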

\begin{proof}
We let~$S_1 \eqdef \set{ g_1, g_2, \dotsc, g_t}$, any subset consisting of~$t$ distinct group elements, where~$ t \eqdef \big\lceil{ \sqrt{n \ln n} \,}\big\rceil$. Let~$\bR \eqdef \set{ \bh_1, \bh_2, \dotsc, \bh_t}$ be a set of~$t$ distinct group elements chosen uniformly at random from the collection of~$t$-element subsets of~$G$. Recall that the set of products of elements from~$S_1$ and~$\bR$ is denoted as~$S_1 \bR$, i.e., $S_1 \bR \eqdef \set{xy : x \in S_1, ~ y \in \bR}$.

We show that for any fixed element~$g \in G$, the probability that~$g \not\in S_1 \bR$ is less than~$1/n$.
\begin{align*}
\Pr( g \not\in S_1 \bR) \quad & = \quad \Pr( \forall i \in [t], ~ g_i^{-1} g \not\in \bR ) \\
    & = \quad \binom{n - t}{t} \binom{n}{t}^{-1} \\
    & = \quad \frac{(n - t) (n - t - 1) (n - t - 2) \dotsb (n - 2t + 1) }{ n (n - 1) (n - 2) \dotsb (n - t + 1)} \\
    & = \quad \left( 1 - \frac{t}{n} \right) \left( 1 - \frac{t}{n - 1} \right) \left( 1 - \frac{t}{n - 2} \right)        \dotsb \left( 1 - \frac{t}{n - t + 1} \right) \enspace. % \\
\end{align*}
Since~$t > 1$, we get
\[
\Pr( g \not\in S_1 \bR) 
    \quad  < \quad \left( 1 - \frac{t}{n} \right)^t
    \quad < \quad \exp \! \left( - \frac{t^2}{n} \right)
    \quad \le \quad \frac{1}{n} \enspace.
\]
By the Union Bound, the probability that~$S_1 \bR \not= G$ is strictly less than~$1$. So there is a subset~$S_2$ of size~$t$ such that~$S_1 S_2 = G$.
\end{proof}
A similar result may be derived from a generalization of the Erd{\H{o}}s-R{\'e}nyi Theorem~\cite{ER65-prob-method-group-theory} due to Babai and Erd{\H{o}}s~\cite{BE82-short-products}. However, this gives us a generating pair in which the size of a set may be as large as~$4 \sqrt{n \ln n} \,$.

As before, the algorithm for the general Hidden Subgroup Problem follows directly from the existence of a suitable generating pair for the underlying group.
\begin{corollary}
\label{cor-non-abelian-hsp}
There is a deterministic algorithm with query complexity at most~$2 \big\lceil \sqrt{n \ln n}\, \big\rceil$ that solves the Hidden Subgroup Problem over an arbitrary group~$G$ with order~$n$. Moreover, if the input function hides the non-trivial subgroup~$H$, the algorithm finds all the elements of~$H$.
\end{corollary}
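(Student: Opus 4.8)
The plan is to follow the template of the abelian case (\Cref{cor-ahsp}), translating it into multiplicative notation. By \Cref{thm-non-abelian}, the group~$G$ admits a generating pair~$S_1, S_2$ with~$\size{S_1}, \size{S_2} \le \ceil{\sqrt{n \ln n}\,}$, so~$S_1 S_2 = G$. The algorithm queries~$f$ at~$x^{-1}$ for every~$x \in S_1$ and at~$y$ for every~$y \in S_2$; this uses at most~$\size{S_1} + \size{S_2} \le 2 \ceil{\sqrt{n \ln n}\,}$ queries. If the values at all queried points are distinct, the algorithm declares~$f$ injective; otherwise it declares~$f$ non-injective and outputs the set~$\set{ xy : x \in S_1, ~ y \in S_2, ~ f(x^{-1}) = f(y)}$.

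First I would dispose of the injective case: if~$f$ is injective then the queried values are pairwise distinct, and the algorithm answers correctly. Next, suppose~$f$ hides a non-trivial subgroup~$H$. The key fact is the collision criterion for the Hidden Subgroup Problem, namely~$f(a) = f(b)$ if and only if~$a^{-1} b \in H$. Given any~$h \in H$, the generating-pair property lets us write~$h = x y$ with~$x \in S_1$ and~$y \in S_2$. Choosing the queried points~$a \eqdef x^{-1}$ and~$b \eqdef y$, we get~$a^{-1} b = x y = h \in H$, so~$f(x^{-1}) = f(y)$. When~$h \neq \id$ this is a genuine collision, since~$x^{-1} = y$ would force~$h = x x^{-1} = \id$; hence the algorithm correctly detects non-injectivity.

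It remains to verify that the output set is exactly~$H$, which I would establish by two inclusions. For one direction, any element~$xy$ appearing in the output satisfies~$f(x^{-1}) = f(y)$, so by the collision criterion~$(x^{-1})^{-1} y = x y \in H$; thus the output is contained in~$H$. For the reverse inclusion, the decomposition~$h = xy$ from the previous paragraph shows that every~$h \in H$ is realized as a product~$xy$ with~$f(x^{-1}) = f(y)$, so~$H$ is contained in the output. Combining the two gives equality, so the algorithm finds all of~$H$.

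I do not expect a genuine obstacle, as the entire group-theoretic difficulty has been isolated into \Cref{thm-non-abelian}. The only point demanding care is the non-commutative bookkeeping: one must query~$x^{-1}$ rather than~$x$ on the~$S_1$ side so that the product~$a^{-1}b$ telescopes precisely to the generating-pair decomposition~$xy$, matching an arbitrary element of~$H$. The query bound is then immediate from the sizes of~$S_1$ and~$S_2$.
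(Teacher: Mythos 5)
Your proposal is correct and matches the paper's proof essentially step for step: same generating pair from \Cref{thm-non-abelian}, same queries at~$x^{-1}$ for~$x \in S_1$ and at~$y \in S_2$, same output set, and the same collision argument via~$h = xy$. The only difference is cosmetic — you spell out the two inclusions showing the output equals~$H$, which the paper leaves implicit.
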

\begin{proof}
By \Cref{thm-non-abelian}, there is a generating pair~$S_1, S_2$ for the group~$G$ such that~$\size{S_1}, \size{S_2} \le \big\lceil \sqrt{n \ln n}\, \big\rceil$. We query the oracle function~$f$ at~$x^{-1}$ for all~$x \in S_1$ and at all elements~$y \in S_2$. If the function is injective on the set of points queried, we say~$f$ is injective. Otherwise, we say that~$f$ is not injective, and output~$\set{ xy : x \in S_1, ~ y \in S_2, ~ f( x^{-1} ) = f(y) }$.

If the function is injective, the above algorithm outputs the correct answer. Suppose~$f$ hides the non-trivial subgroup~$H$. Let~$h \in H$ be any element of the hidden subgroup. We have~$h = h_1 h_2 $ for some~$h_1 \in S_1$ and~$h_2 \in S_2$. By definition of~$H$ we have~$f(h_1^{-1}) = f( h_2)$. When~$h$ is not the identity, we have~$h_1^{-1} \neq h_2$ and the algorithm detects a collision. Moreover, the algorithm computes~$H$ correctly.
\end{proof}

We can obtain explicit, optimal algorithms for certain classes of non-abelian groups. For example, if a group~$G$ of order~$n$ has a subgroup~$H$ of order~$\Theta( \sqrt{n} \,)$, then we can construct a generating pair~$S_1, S_2$ of size~$\Theta( \sqrt{n} \,) $ by taking~$S_1 \eqdef H$ and~$S_2$ to be a complete set of coset representatives of~$H$. In fact, in this case, the group satisfies a stronger property; see, for example, Ref.~\cite{BW20-cayley-sudoku}. Not all groups have a subgroup of such size. For example, the abelian group~$\integers_p$ for a prime~$p$ does not have such a subgroup, and yet admits a suitable generating pair.

\section{Subgroup-dependent query complexity}
\label{sec-subgroup-dependent}

There is a randomized algorithm for the Hidden Subgroup Problem that has expected query complexity~$ \Order(\sqrt{ \size{G} / \size{H}}\, )$ when the oracle~$f$ hides~$H$, and has worst-case query complexity~$\Order( \sqrt{ \size{G}} \,) $. In this section we use the notion of a generating pair in a more sophisticated manner to match this performance with a deterministic algorithm for all abelian groups, and some classes of non-abelian groups. 

% As before, we use the product notation for the group operation when discussing general groups and the sum notation for abelian groups.

The algorithm rests on the following observation.
\begin{lemma}
\label{lem-pigeon-hole}
Suppose~$G_1$ and~$H$ are subgroups of the possibly non-abelian finite group~$G$ such that~$G_1$ has cardinality at least~$\size{G}/ \size{H}$. Then either~$G_1 H = G$ or~$\size{G_1 \intersect H} > 1 $. 
\end{lemma}
\begin{proof}
The intersection~$G_1 \intersect H$ is a subgroup of~$G$ and contains the identity element. There are exactly~$\size{G}/ \size{H}$ distinct left cosets of~$H$. If~$G_1 H \neq G$, by the Pigeon-Hole Principle, there are two distinct elements~$g_1, g_2 \in G_1$ such that~$g_1 H = g_2 H$, i.e., $ g_2^{-1} g_1 \in H$. Since~$g_2^{-1} g_1$ is also an element of~$G_1$, and is not the identity element, we have~$\size{G_1 \intersect H} > 1$.
\end{proof}

Suppose that~$G$ is abelian, we know the order~$m$ of the hidden subgroup~$H$, and~$m > 1$. Then we may find a non-trivial element of~$H$ as follows. Consider any subgroup~$G_1$ of~$G$ of order~$n/m$, where~$n \eqdef \size{G}$. Such a subgroup exists since~$G$ has a subgroup with order~$d$ for any positive divisor~$d$ of~$n$~\cite[Corollary~2.4, page~77]{H74-algebra}. Let~$S_1, S_2$ be a generating pair for~$G_1$, and let~$g$ be any element of~$ G \setminus G_1$. By \Cref{lem-pigeon-hole}, either~(i) there is a non-identity element~$h \in H$ that is also in~$G_1$, or (ii)~$G_1 H = G$. In case~(i), let~$h = a + b$, where~$a \in S_1$ and~$b \in S_2$. If we query the oracle~$f$ at the elements~$ -x$ and~$y$, for all~$x \in S_1$ and~$y \in S_2$, we will find~$f( -a) = f(b)$, and can compute~$h$. In case~(ii), we have~$g = g_1 + h$ for some~$g_1 \in G_1$ and~$h \in H$. Since~$g$ is not in~$G_1$, the element~$h$ is not the identity. Suppose~$g_1 = a + b$, with~$a \in S_1$ and~$b \in S_2$. Then~$g - a = b + h$, and~$f(g - a) = f(b)$. If we query the oracle~$f$ at the elements~$g - x$ and~$y$, for all~$x \in S_1$ and~$y \in S_2$, we will find~$f(g - a) = f(b)$, and can compute~$h$. This is the key idea underlying the algorithm.

For sets~$S_1, S_2 \subseteq G$, define~$ S_1^{-1}$ as the set~$ S_1^{-1} \eqdef \set{ x^{-1} : x \in S_1 }$. Following our notation for the product of sets of group elements, $ S_1^{-1} S_2 = \set{ x^{-1} y : x \in S_1, ~ y \in S_2}$. Algorithm~\ref{alg-fc} (Find-Collision) implements the above idea with a geometrically decreasing sequence of guesses for the order of~$H$. We show in \Cref{thm-fc} that the algorithm is correct and has the query complexity we seek for a large class of groups.
\begin{algorithm}[ht]
\caption{Find-Collision($G, f$) \label{alg-fc}}

\SetKwInOut{Input}{Input}
\SetKwInOut{Output}{Output~}
\SetKwData{Inj}{injective}
\SetKwData{Coll}{collision}

\Input{group~$G$ of order~$n$, with~$n > 1$; oracle for~$f : G \rightarrow S$ that hides a subgroup}
\Output{\Inj, or \Coll~$a,b \in G$}

\BlankLine
Let~$k$ be the integer~$l$ such that~$n \in (2^l, 2^{l+1}]$ \;
\lIf{ $G$ is abelian }{ $k_0 \leftarrow 0$ }
\lElse{ $k_0 \leftarrow -1$ }
\While{ $k \ge k_0$ }{
  Find, if there is one, a subgroup~$G_1 \le G$ with the largest order in~$\big[ n/2^{k + 1}, ~ n/(\floor{2^k} + 1) \big]$ \tcc*[r]{the expression $\floor{2^k}$ is required to correctly handle the case~$k = k_0 = -1$}
  \uIf{ such a subgroup~$G_1$ exists}{
    Find a generating pair~$S_1, S_2$ for~$G_1$ which minimizes~$\max \set{ \size{S_1}, ~ \size{S_2} }$ \;
    \lIf{ $ G_1 = G$ }{ $ g \leftarrow e$, the identity element of~$G$ }
    \lElse{ $ g \leftarrow $ any element in~$ G \setminus G_1$ }
    Let~$R \leftarrow S_1^{-1} \union S_2 \union \big( S_1^{-1} \set{g} \big) $ \;
    Query~$f$ at all the elements in~$R$ \;
    \lIf{ $f(z) = f(y)$ for some~$z,y \in R$ such that~$z \neq y$}{\Return{\Coll~$z,y$} }
  }
  $k \leftarrow k - 1$ \;
}
\Return{\Inj}
\end{algorithm}

\begin{theorem}
\label{thm-fc}
Algorithm~\ref{alg-fc} (Find-Collision) solves the Hidden Subgroup Problem over any finite group~$G$. If the order of the group is~$n$ and that of the hidden subgroup~$H$ is~$m$, the algorithm has query complexity as stated below.
\begin{enumerate}
\item 
\label{part-abelian} 
If~$G$ is abelian, the algorithm has query complexity~$ \Order( \sqrt{n/m} \,)$.
\item 
\label{part-non-abelian} 
If~$G$ is not abelian, the algorithm has query complexity~$\Order( \sqrt{n \ln n} \,)$. Further, if~$G$ has a subgroup of order~$n_1$ such that~$n/m \le n_1 \le \kappa n/m$ for some~$\kappa  \ge 1$, then the query complexity is~$ \Order( \sqrt{(\kappa n/m) \ln (\kappa n/m)} \,)$.
\end{enumerate}
\end{theorem}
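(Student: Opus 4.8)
The plan is to establish correctness first and then bound the queries, handling the abelian and non-abelian cases within one framework. For correctness, observe that the algorithm reports a collision only upon actually seeing $f(z) = f(y)$ with $z \neq y$; since the oracle genuinely hides a subgroup, any such pair is a true collision. Hence if $f$ is injective the algorithm sees no collision and correctly returns \emph{injective}. The work is therefore to show that when $f$ hides a non-trivial subgroup $H$ of order $m$, a collision is \emph{observed} before the loop exits. The key device is a \emph{decisive iteration}: I would let $\hat{k}$ be the largest loop index $k$ for which the subgroup $G_1$ found in iteration $k$ has order at least $n/m$, and argue via \Cref{lem-pigeon-hole} that a collision is forced there. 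For indices $k > \hat k$ the found subgroup has order below $n/m$ and the algorithm may pass without a collision, but in the worst case it reaches $\hat k$ and halts.

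At iteration $\hat{k}$, write $G_1$ for the chosen subgroup and $S_1, S_2$ for its generating pair, so $S_1 S_2 = G_1$. Since $\size{G_1} \ge n/m = \size{G}/\size{H}$, \Cref{lem-pigeon-hole} gives either $\size{G_1 \intersect H} > 1$ or $G_1 H = G$. In the first case I pick a non-identity $h \in G_1 \intersect H$ and factor $h = ab$ with $a \in S_1, b \in S_2$; then $a^{-1}, b \in R$, $a^{-1} \neq b$ (else $h = e$), and $(a^{-1})^{-1} b = h \in H$ forces $f(a^{-1}) = f(b)$. In the second case $G_1 \neq G$ (its order is strictly less than $n$), so $g \in G \setminus G_1$; factoring $g = g_1 h$ with $g_1 \in G_1$, $h \in H$ (so $h \neq e$) and $g_1 = ab$ gives $a^{-1} g, b \in R$ with $(a^{-1} g)^{-1} b = g^{-1} g_1 = h^{-1} \in H$, whence $f(a^{-1} g) = f(b)$, and again the two queried points differ. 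Either way $R$ contains a colliding pair.

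For the query count I would bound the cost of each executed iteration and sum. In iteration $k$ the subgroup $G_1$ has order at most the top of the search interval, $n/(\floor{2^k}+1) \le n/2^k$, so by \Cref{thm-abelian} (abelian) or \Cref{thm-non-abelian} (general) the minimized generating pair has both parts of size $\Order(\sqrt{n/2^k}\,)$, respectively $\Order(\sqrt{(n/2^k)\ln n}\,)$; since $\size{R} \le 2\size{S_1} + \size{S_2}$, these are also the per-iteration query bounds. Summing over $k = \hat k, \hat k + 1, \dotsc$ gives a geometric series in $2^{-k/2}$ dominated by its $k = \hat k$ term, so the total has the order of iteration $\hat k$. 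It then remains to lower-bound $2^{\hat k}$. The crucial point is that every subgroup order $d$ divides $n$, so $n/d$ is an integer; reindexing the intervals $\big[n/2^{k+1},\, n/(\floor{2^k}+1)\big]$ by the integer $n/d$ shows they partition the positive integers (the index-$k$ interval captures $n/d \in \{\floor{2^k}+1, \dotsc, 2^{k+1}\}$), so no subgroup order is lost to a gap. In the abelian case a subgroup of order exactly $n/m$ exists and lands in the interval with $2^{k'} \ge m/2$; in the general case the hypothesized subgroup of order $n_1 \in [n/m,\, \kappa n/m]$ lands in one with $2^{k'} \ge m/(2\kappa)$, and $k' \ge k_0$ so the loop reaches it. As that subgroup has order at least $n/m$, we get $\hat k \ge k'$, hence $\size{G_1^{(\hat k)}} \le n/2^{\hat k} \le n/2^{k'}$, which is $\Order(n/m)$ abelian and $\Order(\kappa n/m)$ in general. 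Substituting into the per-iteration bounds yields $\Order(\sqrt{n/m}\,)$ and $\Order(\sqrt{(\kappa n/m)\ln(\kappa n/m)}\,)$; the plain non-abelian bound $\Order(\sqrt{n \ln n}\,)$ is the case where only $G$ itself (order $n$) is guaranteed, with $\hat k = k_0 = -1$.

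The main obstacle I anticipate is the bookkeeping around the decisive iteration: confirming that the search intervals, with the floor in $\floor{2^k}+1$ and the special index $k = -1$, tile the admissible subgroup orders without a gap (so the guaranteed subgroup is genuinely found), checking $k' \ge k_0$, and converting the lower bound on $2^{\hat k}$ into the stated width bound on $\size{G_1^{(\hat k)}}$ with clean constants. The collision argument of the second paragraph and the geometric summation are routine by comparison; the delicate step is capturing the order-$(n/m)$, respectively order-$n_1$, subgroup in the correct interval.
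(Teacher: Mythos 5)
Your proposal is correct and takes essentially the same route as the paper's proof: \Cref{lem-pigeon-hole} forces a collision at the decisive iteration whose interval captures a subgroup of order at least $n/m$, \Cref{thm-abelian} and \Cref{thm-non-abelian} bound each iteration's queries, and the geometric sum is dominated by that iteration's term. One minor repair: state the non-abelian per-iteration bound as $\Order\big(\sqrt{(n/2^k)\ln(n/2^k)}\,\big)$ (which \Cref{thm-non-abelian} applied to $G_1$ gives directly, since $\size{G_1} \le n/2^k$) rather than $\Order\big(\sqrt{(n/2^k)\ln n}\,\big)$, because the latter only sums to $\Order\big(\sqrt{(\kappa n/m)\ln n}\,\big)$ and not the claimed $\Order\big(\sqrt{(\kappa n/m)\ln(\kappa n/m)}\,\big)$.
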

\begin{proof}
Since the algorithm outputs a collision only when it finds one, it gives the correct answer when the oracle function~$f$ is injective. Suppose the function~$f$ hides a non-trivial subgroup~$H$, so that~$m \ge 2$.

If the group is abelian, it has a subgroup of order~$n/m$~\cite[Corollary~2.4, page~77]{H74-algebra}, and~$n/m \le n/2$. If it does not find a collision in an earlier iteration, Algorithm~\ref{alg-fc} finds a collision in an iteration with~$k = \ell \ge 0$, where~$\ell$ is such that~$n/m \in \big[ n/2^{\ell + 1}, ~ n/(2^\ell + 1) \big]$. This is due to the reasoning given after \Cref{lem-pigeon-hole}. It thus outputs the correct answer. We have~$\ell = \ceil{ \log_2 m} - 1$, and initially, $k = \ceil{ \log_2 n} - 1 \ge 0$. By \Cref{thm-abelian}, we have~$\size{S_1}, \size{S_2} \le 2 \sqrt{ \size{G_1}}$ in every iteration with queries. So the query complexity of the algorithm is at most
\begin{align*}
\sum_{k = \ell}^{\ceil{ \log_2 n} - 1} 3 \cdot 2 \sqrt{ n/ 2^k} \quad
    & \le \quad 6 \sqrt{ \frac{n}{ 2^\ell}} ~ \sum_{i \ge 0} \frac{1}{ \sqrt{2^i}} \\
    & \le \quad 12 (1 + \sqrt{2} \,) \sqrt{n/m} \enspace,
\end{align*}
as~$\ell \ge \log_2 m - 1$. The bound on the query complexity when~$m = 1$ is the same as that for~$m = 2$, as the algorithm executes all the iterations until~$k = 0$. Part~\ref{part-abelian} of the theorem thus follows.

Suppose~$G$ is non-abelian. If the algorithm does not find a collision in earlier iterations, when~$k = -1$, we have~$G_1 = G$, and correctness follows as in \Cref{cor-non-abelian-hsp}. If~$G$ has a proper subgroup of order~$n_1$ with~$n/m \le n_1 \le \kappa n/m$ for some~$\kappa $, the algorithm finds a collision as in the abelian case in an iteration with~$k \ge \ell \ge 0$, where~$\ell$ is such that~$\kappa n/m \in \big[ n/2^{\ell + 1}, ~ n/(2^\ell + 1) \big]$. It thus outputs the correct answer. Further, we have~$\ell = \ceil{ \log_2 (m/\kappa )} - 1$, and initially, $k = \ceil{ \log_2 n} - 1 \ge 0$. By \Cref{thm-non-abelian}, we have~$\size{S_1}, \size{S_2} \le \sqrt{ \size{G_1} \ln \size{G_1}} + 1$ in every iteration with queries. So the query complexity of the algorithm is at most
\begin{align*}
\lefteqn{ \sum_{k = \ell}^{\ceil{ \log_2 n} - 1} 3 \left( 1 + \left( \frac{n}{ 2^k} \ln \frac{n}{ 2^k} \right)^{1/2} \right) } \\
    & \quad \le \quad 3( \ceil{ \log_2 n} - \ceil{ \log_2 (m/\kappa )} + 1 ) + 3 \left( \frac{n}{ 2^\ell} \ln \frac{n}{ 2^\ell} \right)^{1/2} \sum_{i \ge 0}  \frac{1}{ \sqrt{2^i}} \\
    & \quad \le \quad 6 + 3 \log_2 (\kappa n/m) + 3 (2 + \sqrt{2} \,) \left( \frac{2\kappa n}{m} \ln \frac{2\kappa n}{m} \right)^{1/2} \enspace,
\end{align*}
as~$\ell \ge \log_2 (m/\kappa ) - 1$. The bound on the query complexity when~$m = 1$ is~$2 \ceil{ \sqrt{ n \ln n} \,}$ more than that for~$m = 2$, as the algorithm executes all the iterations until~$k = -1$. Part~\ref{part-non-abelian} of the theorem thus follows.
\end{proof}
Unlike abelian groups, a non-abelian group of order~$n$ may not have a subgroup of order~$n/m$ when it has a proper subgroup of order~$m$. For example~$A_4$, the alternating group of degree~$4$, has order~$12$, has several subgroups of order~$2$, but does not have a subgroup of order~$6$. However, in large classes of instances of the Hidden Subgroup Problem, subgroups of suitable size exist. An immediate example is the class of CLT groups. (A group~$G$ is called a \emph{converse Lagrange Theorem (CLT) group\/} if it contains a subgroup of order~$d$ for every positive divisor~$d$ of~$\size{G}$.) CLT groups include supersolvable groups; see, e.g., Ref.~\cite{McLain57-subgroup-order}. For such instances, Find-Collision achieves query complexity~$\Order( \sqrt{ (n/m) \ln(n/m)} \,)$.

\section{Finding the hidden subgroup}
\label{sec-finding-subgroup}

Unlike the algorithms in \Cref{cor-ahsp} and \Cref{cor-non-abelian-hsp}, Algorithm~\ref{alg-fc} may find only one non-trivial element from the hidden subgroup. In this section, we show how to extend Algorithm~\ref{alg-fc} to find the entire hidden subgroup. 

When the group~$G$ is abelian, the problem of finding the entire subgroup may be reduced to that of finding one non-trivial element of the subgroup. This allows us to identify the subgroup by repeatedly using Algorithm~\ref{alg-fc} to find a set of generators.

\begin{algorithm}[ht]
\caption{Find-Abelian-Subgroup($G, f$) \label{alg-find-ahs}}

\SetKwInOut{Input}{Input}
\SetKwInOut{Output}{Output~}
\SetKwData{Inj}{injective}
\SetKwData{Coll}{collision}
\SetKwData{Gen}{generators}

\Input{group~$G$ of order~$n$, with~$n > 1$; oracle for~$f : G \rightarrow S$ that hides a subgroup}
\Output{\Inj, or \Gen~$S \subset G$ of the hidden subgroup}

\BlankLine
$S \leftarrow \emptyset$ \;
\Repeat{ outcome $=$ \Inj }{
  $H_1 \leftarrow \langle S \rangle$, the subgroup generated by~$S$ \;
  $G_1 \leftarrow G/H_1 $ \;
  \uIf{ $ \size{G_1} > 1$ }{
    Let~$f_1$ be the function defined by~$f$ and~$H_1$ in the proof of \Cref{thm-abelian-find-hs} \; 
    \textit{outcome\/} $\leftarrow$ Find-Collision($G_1, f_1$) \;
    \lIf{ outcome $=$ \Coll~$a,b$ }{ $S \leftarrow S \union \set{ g_1^{-1} g_2}$, where~$a = g_1 H_1$ and~$b = g_2 H_1$ }
  }
  \uElse{ \textit{outcome\/} $ \leftarrow$ \Inj }
}
\lIf{ $S = \emptyset$ }{ \Return{\Inj} }
\lElse{ \Return{ \Gen~$S$} }
\end{algorithm}

\begin{theorem}
\label{thm-abelian-find-hs}
There is a deterministic algorithm that solves the Hidden Subgroup Problem over any finite abelian group~$G$, and finds the hidden subgroup with~$ \Order( (\log m) \sqrt{n/m} \, )$ queries when the order of~$G$ is~$n$ and that of the hidden subgroup is~$m$.
\end{theorem}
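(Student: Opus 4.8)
The plan is to show that Algorithm~\ref{alg-find-ahs} (Find-Abelian-Subgroup) is correct and makes the claimed number of queries. The central idea is that for abelian groups, finding the whole subgroup reduces to repeatedly finding a \emph{single} non-trivial element, via a quotient construction. First I would define the function~$f_1$ referenced in the algorithm. Given the partial set of generators~$S$ found so far, let~$H_1 \eqdef \langle S \rangle$; since~$G$ is abelian, $H_1$ is a normal subgroup contained in the true hidden subgroup~$H$ (this containment is the key invariant to maintain). On the quotient group~$G_1 \eqdef G/H_1$, I would define~$f_1(gH_1) \eqdef f(g)$. The crucial claim is that~$f_1$ is well-defined and hides the subgroup~$H/H_1$ in~$G_1$: since~$H_1 \subseteq H$ and~$f$ is constant exactly on cosets of~$H$, the value~$f(g)$ depends only on the coset~$gH \supseteq gH_1$, so~$f_1$ is constant on cosets of~$H/H_1$ and distinguishes them. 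Running Find-Collision on the instance~$(G_1, f_1)$ therefore either reports injective (meaning~$H/H_1$ is trivial, i.e.\ $H_1 = H$, and we are done) or returns a collision~$a = g_1 H_1$, $b = g_2 H_1$ with~$g_1^{-1} g_2 \in H \setminus H_1$, giving a genuinely new element of~$H$ to adjoin to~$S$.

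For correctness, I would argue by the loop invariant that at the start of each iteration~$H_1 = \langle S \rangle \subseteq H$. Each successful call to Find-Collision strictly enlarges~$H_1$ (the new generator lies in~$H \setminus H_1$, so~$\langle S \cup \{g_1^{-1}g_2\}\rangle$ properly contains~$H_1$), and the loop terminates when the quotient becomes trivial or Find-Collision reports injective, at which point~$H_1 = H$ and~$S$ generates the hidden subgroup. If~$f$ is injective to begin with, the first call returns injective with~$S = \emptyset$, handled by the final conditional. The number of successful iterations is bounded by the length of the longest strictly increasing chain of subgroups from~$\{e\}$ to~$H$, which is at most~$\log_2 m$ since each step at least doubles~$\size{H_1}$.

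For the query complexity, I would sum the cost of the Find-Collision calls. In the iteration with current subgroup~$H_1$ of order~$m_1$, the quotient~$G_1$ has order~$n/m_1$ and hides a subgroup of order~$m/m_1$, so by Part~\ref{part-abelian} of \Cref{thm-fc} the call uses~$\Order\!\big(\sqrt{(n/m_1)/(m/m_1)}\,\big) = \Order(\sqrt{n/m}\,)$ queries, independent of~$m_1$. With at most~$\Order(\log m)$ calls, the total is~$\Order\!\big((\log m)\sqrt{n/m}\,\big)$, as claimed. The one subtlety to check is that the quotient group~$G_1 = G/H_1$ is still abelian (immediate, as quotients of abelian groups are abelian) so that Part~\ref{part-abelian} applies, and that we can compute the generating pairs and oracle evaluations for~$f_1$ without extra queries to~$f$ beyond those charged to Find-Collision.

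The main obstacle I expect is verifying that~$f_1$ is a \emph{legal} oracle instance for Find-Collision — that is, that its codomain is large enough and that it genuinely hides~$H/H_1$ rather than merely being constant on those cosets — so that the correctness and complexity guarantees of \Cref{thm-fc} transfer verbatim. This requires carefully checking that~$f_1$ takes distinct values on distinct cosets of~$H/H_1$, which follows from~$f$ distinguishing distinct cosets of~$H$ in~$G$, but must be stated cleanly. A secondary point is ensuring that the per-call complexity bound~$\Order(\sqrt{n/m}\,)$ is uniform over all iterations despite~$m_1$ varying; the cancellation of~$m_1$ shown above is what makes the overall bound only lose a~$\log m$ factor rather than more.
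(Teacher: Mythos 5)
Your proposal is correct and follows essentially the same route as the paper's own proof: you define the quotient oracle~$f_1(gH_1) \eqdef f(g)$, verify it hides~$H/H_1$ in~$G/H_1$, invoke Find-Collision to extract a new generator in~$H \setminus H_1$ per iteration, bound the number of iterations by~$\log_2 m$ via the doubling argument, and observe that the ratio~$\size{G_1}/\size{H/H_1} = n/m$ is invariant across iterations so each call costs~$\Order(\sqrt{n/m}\,)$ by \Cref{thm-fc}. The only differences are presentational (your explicit checks that the quotient is abelian and that~$f_1$ is a legal hiding instance are left implicit in the paper).
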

\begin{proof}
Suppose the oracle is~$f$, the hidden subgroup is~$H$, and we know a set of generators for a subgroup~$H_1 \le H$.

Define a function~$f_1$ on the quotient group~$G/ H_1$ as~$f_1( g H_1) \eqdef f(g)$ for any~$g \in G$. The function~$f_1$ is well-defined as left cosets of~$H_1$ in~$G$ are subsets of left cosets of~$H$ in~$G$, and the function~$f$ is constant on left cosets of~$H$. Moreover, the function~$f_1$ hides the subgroup~$H/H_1$ of~$G/H_1$, as the left cosets of~$H$ in~$G$ correspond to left cosets of~$H/H_1$ in~$G/H_1$. Finally, the function~$f_1$ may be evaluated with one query to the oracle for~$f$.

Using this reduction, we may find the hidden subgroup~$H$ using Algorithm~\ref{alg-find-ahs}. The correctness of the algorithm follows by observing that in any iteration, if~$f_1$ is injective, then~$H_1$ equals the hidden subgroup. If~$f_1$ is not injective, by \Cref{thm-fc}, Find-Collision($ G_1, f_1$) returns a collision~$g_1 H_1, g_2 H_1 \in G/H_1$ for~$f_1$. Note that~$g_1, g_2$ is a collision for~$f$. We also have~$ g_1^{-1} g_2 \not\in H_1 $, so along with~$ g_1^{-1} g_2 $, the set~$S$ generates a larger subgroup of~$H$. Thus, the size of the subgroup~$H_1$ increases by a factor of at least~$2$ in every iteration a collision is found, and the algorithm terminates after at most~$\log_2 m$ iterations. In every iteration of Algorithm~\ref{alg-find-ahs}, the ratio of the order of~$G_1$ and the hidden subgroup~$H/ H_1$ equals~$n/m$. The query complexity of the algorithm now follows from \Cref{thm-fc}.
\end{proof}

It is not clear how to extend Algorithm~\ref{alg-find-ahs} to the non-abelian case, since the group $H_1$ may not be normal in general, and the corresponding quotient group may not be defined. We present a different algorithm, Algorithm~\ref{alg-find-hs}, that works for some abelian \emph{and\/} some non-abelian instances (which we describe after \Cref{thm-find-hs}). Algorithm~\ref{alg-find-hs} builds on Algorithm~\ref{alg-fnc}, which is a variant of Algorithm~\ref{alg-fc} and is also based on \Cref{lem-pigeon-hole}.
\begin{algorithm}[ht]
\caption{Find-New-Collision($G, H_1, f$) \label{alg-fnc}}

\SetKwInOut{Input}{Input}
\SetKwInOut{Output}{Output~}
\SetKwData{Nocoll}{no-new-collision}
\SetKwData{Coll}{collision}

\Input{group~$G$ of order~$n$, with~$n > 1$; subgroup~$H_1 \le G$; oracle for~$f : G \rightarrow S$ that hides a subgroup containing~$H_1$}
\Output{\Nocoll, or \Coll~$a,b \in G$ such that~$a^{-1}b \not\in H_1$}

\BlankLine
Let~$k$ be the integer~$l$ such that~$n \in (2^l, 2^{l+1}]$ \;
\lIf{ $G$ is abelian }{ $m_1 \leftarrow \max \set{ 2, \size{H_1} }$ }
\lElse{ $m_1 \leftarrow \size{H_1}$ }
$k_0 \leftarrow \ceil{ \log_2 m_1 } - 1$ \;
\While{ $k \ge k_0$ }{
  Find, if there is one, a subgroup~$G_1 \le G$ with the largest order in~$\big[ n/2^{k + 1}, ~ n/(\floor{2^k} + 1) \big]$ such that~$G_1 \intersect H_1 = \set{e}$, where~$e$ is the identity element of~$G$ \tcc*[r]{the expression $\floor{2^k}$ is required to correctly handle the case~$k = k_0 = -1$}
  \uIf{ such a subgroup~$G_1$ exists}{
    Find a generating pair~$S_1, S_2$ for~$G_1$ which minimizes~$\max \set{ \size{S_1}, ~ \size{S_2} }$ \;
    \lIf{ $ G_1 H_1 = G$ }{ $ g \leftarrow e$, the identity element of~$G$ }
    \lElse{ $ g \leftarrow $ any element in~$ G \setminus (G_1 H_1)$ }
    Let~$R \leftarrow S_1^{-1} \union S_2 \union \big( S_1^{-1} \set{g} \big) $ \;
    Query~$f$ at all the elements in~$R$ \;
    \lIf{ $f(z) = f(y)$ for some~$z,y \in R$ such that~$z^{-1}y \not\in H_1$}{\Return{\Coll~$z,y$} }
  }
  $k \leftarrow k - 1$ \;
}
\Return{\Nocoll}
\end{algorithm}
\begin{theorem}
\label{thm-fnc}
Let~$(G,f)$ be an instance of the Hidden Subgroup Problem, $H$ the subgroup that~$f$ hides, and~$H_1$ a subgroup of~$H$. Let the orders of~$G, H, H_1$ be~$n, m, m_1$, respectively. Suppose~$G$ has a subgroup~$G_0$ of order~$n_0$ such that~$n_0 \ge n/m $, and~$G_0$ intersects~$ H_1$ only in the identity element. Then Find-New-Collision($G, H_1, f$) (Algorithm~\ref{alg-fnc}) returns ``no new collision'' if~$H_1 = H$, and returns a collision~$a,b \in G$ such that~$a^{-1} b \not\in H_1$ otherwise. Further, if~$ n_0 \le \kappa n/m$ for some~$\kappa \ge 1$, the algorithm has query complexity as stated below:
\begin{enumerate}
\item 
\label{part-abelian-fnc} 
if~$G$ is abelian, the algorithm has query complexity~$ \Order( \sqrt{ \kappa n/m} \,)$, and
\item 
\label{part-non-abelian-fnc} 
if~$G$ is not abelian, the algorithm has query complexity~$\Order( \sqrt{( \kappa n/m) \ln ( \kappa n/m) } \,)$.
\end{enumerate}
\end{theorem}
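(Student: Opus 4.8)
The plan is to mirror the proof of \Cref{thm-fc}, but to carry the auxiliary subgroup~$H_1$ through every step: the search for~$G_1$ is restricted to subgroups with~$G_1 \intersect H_1 = \set{e}$, the branch is on whether~$G_1 H_1 = G$ (rather than~$G_1 = G$), and a pair~$z,y$ is accepted only when~$z^{-1} y \notin H_1$. Two parts of correctness are immediate. Since~$f$ hides~$H$, any accepted pair~$z,y$ has~$f(z) = f(y)$, hence~$z^{-1} y \in H$; together with the test~$z^{-1} y \notin H_1$ this shows every returned collision is genuine and lies outside~$H_1$. And if~$H_1 = H$, every collision of~$f$ satisfies~$z^{-1} y \in H = H_1$, so the acceptance test never fires and the algorithm correctly reports ``no new collision''. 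The content is the remaining claim: when~$H_1 \subsetneq H$, a new collision is found.

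For that claim I would first locate a decisive iteration. Using the hypothesised subgroup~$G_0$ (with~$G_0 \intersect H_1 = \set{e}$ and~$n/m \le n_0 \le \kappa n/m$), the relation~$\size{G_0 H_1} = n_0 \size{H_1} \le n$ gives~$n/n_0 \ge \size{H_1}$; moreover~$n_0$ divides~$n$, so~$n/n_0$ is a positive integer and therefore falls in exactly one search interval, namely the one of index~$k^* \eqdef \ceil{\log_2(n/n_0)} - 1$ (the~$\floor{2^k}$ convention placing~$n/n_0 = 1$ at~$k^* = -1$). I would check~$k^* \ge k_0$: when~$m_1 = \size{H_1}$ this follows from~$n/n_0 \ge \size{H_1} = m_1$, and in the sole remaining case (abelian with~$H_1 = \set{e}$, where~$m_1 = 2$) I would instead take~$G_0$ to be a subgroup of order exactly~$n/m$, which exists for abelian groups and trivially meets~$H_1 = \set{e}$, giving~$n/n_0 = m \ge 2 = m_1$. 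In either case iteration~$k^*$ is executed, and the subgroup~$G_1$ selected there has order at least~$n_0 \ge n/m$ and satisfies~$G_1 \intersect H_1 = \set{e}$.

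The crux is the case analysis at iteration~$k^*$, via \Cref{lem-pigeon-hole} applied to~$G_1$ and~$H$: since~$\size{G_1} \ge n/m = \size{G}/\size{H}$, either~$\size{G_1 \intersect H} > 1$ or~$G_1 H = G$. If~$\size{G_1 \intersect H} > 1$, pick~$h \in (G_1 \intersect H) \setminus \set{e}$; as~$G_1 \intersect H_1 = \set{e}$ we have~$h \notin H_1$, and writing~$h = ab$ with~$a \in S_1, b \in S_2$ yields~$f(a^{-1}) = f(b)$ with~$(a^{-1})^{-1} b = h \notin H_1$, a new collision among the queried points of~$S_1^{-1} \union S_2$. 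Otherwise~$G_1 H = G$, and I split on the visible branch. If~$G_1 H_1 = G$ then~$\size{G_1}\,\size{H_1} = n$, so~$\size{G_1} = n/\size{H_1} > n/m$ because~$\size{H_1} < m$; this strict inequality forces the first alternative of the dichotomy, reducing to the previous case. If instead~$G_1 H_1 \neq G$, the algorithm holds~$g \in G \setminus (G_1 H_1)$; writing~$g = g_1 h$ with~$g_1 \in G_1$ and~$h \in H$, the condition~$g \notin G_1 H_1$ forces~$h \notin H_1$, and with~$g_1 = ab$ ($a \in S_1, b \in S_2$) one gets~$a^{-1} g = bh$, whence~$f(a^{-1} g) = f(b)$ with collision element~$(a^{-1} g)^{-1} b = h^{-1} \notin H_1$, found among the queried points of~$(S_1^{-1}\set{g}) \union S_2$. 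Thus a new collision is detected by iteration~$k^*$, so the ``no new collision'' branch is never reached when~$H_1 \subsetneq H$.

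Finally, for the query complexity I would bound a single iteration by~$\size{R} \le 2\size{S_1} + \size{S_2} \le 3 \max\set{\size{S_1}, \size{S_2}}$ and insert the generating-pair bounds~$\max\set{\size{S_1}, \size{S_2}} \le 2\sqrt{\size{G_1}}$ (\Cref{thm-abelian}, abelian) or~$\le \sqrt{\size{G_1} \ln \size{G_1}} + 1$ (\Cref{thm-non-abelian}, general). The examined subgroups grow as~$k$ decreases and the loop halts at the first collision, so when~$H_1 \subsetneq H$ the smallest index reached is at least~$k^*$, and when~$H_1 = H$ it is~$k_0$ (with~$n_0 = n/m$ forced there). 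Summing the geometric series over~$k \ge k^*$, using~$\size{G_1} \le n/2^k$ and~$2^{k^*} \ge (n/n_0)/2 \ge m/(2\kappa)$, the dominant term gives~$\Order(\sqrt{\kappa n/m})$ in the abelian case and~$\Order(\sqrt{(\kappa n/m) \ln (\kappa n/m)})$ in general, the additive~$+1$ per iteration contributing a lower-order~$\Order(\log(\kappa n/m))$ term that is absorbed. I expect the case analysis of the third paragraph to be the main obstacle, precisely because the algorithm can test only the visible condition~$G_1 H_1 = G$ while \Cref{lem-pigeon-hole} speaks about the unknown~$H$; the work is to show each visible branch exposes an element of~$H \setminus H_1$ as a collision, and that the branch~$G_1 H_1 = G$ cannot occur without~$\size{G_1 \intersect H} > 1$ once~$H_1 \subsetneq H$.
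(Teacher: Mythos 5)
Your proposal is correct and follows essentially the same route as the paper's proof: it locates a decisive iteration via the hypothesised subgroup~$G_0$ (verifying~$k^* \ge k_0$, including the abelian~$H_1 = \set{e}$ edge case), applies \Cref{lem-pigeon-hole} to the selected~$G_1$ and the hidden subgroup~$H$ to exhibit in every case a colliding pair with quotient in~$H \setminus H_1$ among the queried points of~$S_1^{-1} \union S_2 \union S_1^{-1}\set{g}$, and bounds the queries by the same geometric sum with~$2^{k^*} \ge m/(2\kappa)$. The only difference is organizational — you branch first on the dichotomy~$\size{G_1 \intersect H} > 1$ versus~$G_1 H = G$ (using a counting argument to rule out~$G_1 H_1 = G$ with trivial~$G_1 \intersect H$), whereas the paper branches first on the visible condition~$G_1 H_1 = G$ — but the collision witnesses produced are identical.
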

\begin{proof}
When~$H_1 = \set{e}$, the algorithm is identical to Algorithm~\ref{alg-fc}, and its correctness follows by \Cref{thm-fc}. Suppose~$H_1$ is not the trivial subgroup.

Since the algorithm reports a collision~$a,b$ only when~$a^{-1} b \not\in H_1$, it gives the correct answer when~$H_1 = H$. Suppose~$H_1 \neq H$. By hypothesis, $G$ has a subgroup~$G_0$ of order at least~$n/m$ such that~$G_0$ intersects~$ H_1$ only in the identity element. Then~$\size{G_0} \le n/m_1 \le n/2$; otherwise, we would have two distinct elements of~$G_0$ in the same coset of~$H_1$, which implies that $\size{ G_0 \intersect H_1} > 1$.

If it does not find a collision in earlier iterations, the algorithm finds a subgroup~$G_1$ with~$\size{G_1} \ge n/m$ and~$G_1 \intersect H_1 = \set{e}$ in an iteration with~$k \ge k_0 \ge 0$ (recall that~$k_0 \eqdef \ceil{ \log_2 m_1 } - 1 \ge 0$, since~$m_1 \ge 2$).  Let~$S_1,S_2$ be the generating pair for~$G_1$ computed by the algorithm.

Suppose~$G_1 H_1 = G$. Consider any element~$h \in H \setminus H_1 $. We have~$h = g_1 h_1 $ for some~$g_1 \in G_1$ and~$h_1 \in H_1$. So~$g_1 = h h_1^{-1} \in G_1 \intersect H $, and~$g_1 \not\in H_1 $. We also have~$ g_1 = ab $ for some~$a \in S_1$ and~$b \in S_2$, so~$f(a^{-1}) = f(b)$. The algorithm queries the oracle~$f$ at the elements~$x^{-1}$ and~$y$, for all~$x \in S_1$ and~$y \in S_2$, so it finds and returns a collision.

Now suppose~$G_1 H_1 \neq G$, so that the algorithm also finds an element~$g \in G \setminus (G_1 H_1)$ in the same iteration. 

If~$G_1 H = G$, we have~$g = g_1 h$ for some~$g \in G_1$ and~$h \in H$. Since~$g \not\in G_1 H_1$, $g_1 h \not\in G_1 H_1$, and~$h \not\in H_1$. Suppose~$g_1 = ab$, with~$a \in S_1$ and~$b \in S_2$. Then~$a^{-1} g = b h$, and~$f(a^{-1} g) = f(b)$. The algorithm queries the oracle~$f$ at the elements~$x^{-1} g$ and~$y$, for all~$x \in S_1$ and~$y \in S_2$, so it finds~$f(a^{-1} g) = f(b)$, and returns a collision.

If~$G_1 H \neq G$, as in \Cref{lem-pigeon-hole}, there are two distinct elements~$g_1, g_2 \in G_1$ such that~$ g_2^{-1} g_1 \in H$. Since~$ G_1 \intersect H_1 = \set{e}$, and~$g_1 \neq g_2$, we have~$ g_2^{-1} g_1 \in H \setminus H_1$. We also have~$g_2^{-1} g_1 = ab$, for some~$a \in S_1$ and~$b \in S_2$. The algorithm queries the oracle~$f$ at the elements~$x^{-1}$ and~$y$, for all~$x \in S_1$ and~$y \in S_2$, so it finds~$f(a^{-1}) = f(b)$, and returns a collision in this case as well.

Assuming~$n/m \le \size{G_0} \le \kappa n/m \le n$ for some~$\kappa \ge 1$ with~$G_0$ as in the statement of the theorem, the algorithm executes iterations with~$k \ge \ell$, where~$\ell$ is such that~$\kappa n/m \in \big[ n/2^{\ell + 1}, ~ n/(\floor{2^\ell} + 1) \big]$. So the query complexity of the algorithm follows by same kind of analysis as in \Cref{thm-fc}.
\end{proof}

Thus, whenever there is a large enough subgroup~$G_0$ that intersects with a proper subgroup~$H_1$ of the hidden subgroup~$H$ only in the identity element, Algorithm~\ref{alg-fnc} gives us an element~$h$ of~$H$ that is not in~$H_1$. Along with~$H_1$, the element~$h$ generates a strictly larger subgroup of~$H$. As long as the condition above holds for all proper subgroups of~$H$, we can repeat Algorithm~\ref{alg-fnc} until we find a set of generators for the hidden subgroup. This process is described in Algorithm~\ref{alg-find-hs}.
\begin{algorithm}[ht]
\caption{Find-Subgroup($G, f$) \label{alg-find-hs}}

\SetKwInOut{Input}{Input}
\SetKwInOut{Output}{Output~}
\SetKwData{Inj}{injective}
\SetKwData{Coll}{collision}
\SetKwData{Gen}{generators}

\Input{group~$G$ of order~$n$, with~$n > 1$; oracle for~$f : G \rightarrow S$ that hides a subgroup}
\Output{\Inj, or \Gen~$S \subset G$ of the hidden subgroup}

\BlankLine
$S \leftarrow \emptyset$ \;
\Repeat{ outcome $=$ \Inj }{
  $H_1 \leftarrow \langle S \rangle$ \;
  \textit{outcome\/} $\leftarrow$ Find-New-Collision($G, H_1, f$) \;
  \lIf{ outcome $=$ \Coll~$a,b$ }{ $S \leftarrow S \union \set{ a^{-1} b}$ }
}
\lIf{ $S = \emptyset$ }{ \Return{\Inj} }
\lElse{ \Return{ \Gen~$S$} }
\end{algorithm}
\begin{theorem}
\label{thm-find-hs}
Algorithm~\ref{alg-find-hs} (Find-Subgroup) solves the Hidden Subgroup Problem over a group~$G$ and finds the hidden subgroup~$H$ when~$G$ has a subgroup~$G_0$ of order~$n/m$ such that~$\size{G_0 \intersect H} = 1$, where~$n$ and~$m$ are the orders of~$G$ and~$H$, respectively. Moreover, the algorithm makes
\begin{itemize}
\item $ \Order( (\log m) \sqrt{n/m} \, )$ queries when~$G$ is abelian, and
\item $ \Order( (\log m) \sqrt{(n/m) \log (n/m)} \, )$ queries when~$G$ is non-abelian.
\end{itemize}
\end{theorem}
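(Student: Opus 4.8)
The plan is to run Algorithm~\ref{alg-find-hs} as a loop that calls Find-New-Collision on the growing subgroup~$H_1 = \langle S \rangle$, and to show that the hypothesis on~$G_0$ guarantees the preconditions of \Cref{thm-fnc} in \emph{every} call, so that each call either certifies~$H_1 = H$ or enlarges~$H_1$. The single observation that makes everything work is that the trivial-intersection condition is hereditary: we are given a subgroup~$G_0$ with~$\size{G_0} = n/m$ and~$\size{G_0 \intersect H} = 1$, and since every intermediate~$H_1$ will satisfy~$H_1 \le H$, we automatically get~$\size{G_0 \intersect H_1} = 1$ as well. Because~$\size{G_0} = n/m$ meets the requirement~$n_0 \ge n/m$ with equality, the same~$G_0$ witnesses the hypothesis of \Cref{thm-fnc} with~$n_0 = n/m$, i.e.\ with~$\kappa = 1$, in each invocation.

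First I would establish the invariant~$H_1 \le H$ at the start of every iteration by induction. Initially~$S = \emptyset$ and~$H_1 = \set{e} \le H$. Whenever Find-New-Collision returns a collision~$a,b$, we have~$f(a) = f(b)$, hence~$a^{-1}b \in H$; adjoining~$a^{-1}b$ to~$S$ therefore keeps~$\langle S \rangle \le H$. With the invariant in hand, the hereditary observation lets me apply \Cref{thm-fnc} to each call: Find-New-Collision returns ``no new collision'' exactly when~$H_1 = H$, and otherwise returns a collision~$a,b$ with~$a^{-1}b \in H \setminus H_1$. This already yields correctness. If~$f$ is injective then~$m = 1$ and~$H_1 = \set{e} = H$ from the outset, so the first call returns ``no new collision'', $S$ stays empty, and the algorithm reports \texttt{injective}; if~$f$ hides a non-trivial~$H$, the loop exits only once~$H_1 = H$, at which point~$S$ generates~$H$.

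Next I would argue progress, termination, and the query bound together. Each returned collision supplies a generator~$a^{-1}b \in H \setminus H_1$, so~$\langle S \union \set{a^{-1}b} \rangle$ strictly contains~$H_1$; by Lagrange's theorem its order is at least~$2\size{H_1}$. Hence~$\size{H_1}$ at least doubles with every collision, and after at most~$\log_2 m$ collisions we reach~$H_1 = H$, after which one further call returns ``no new collision'' and the loop terminates. The total number of iterations is thus~$\Order(\log m)$. Since~$\kappa = 1$ throughout, \Cref{thm-fnc} bounds the cost of each call by~$\Order(\sqrt{n/m}\,)$ in the abelian case and by~$\Order(\sqrt{(n/m)\ln(n/m)}\,)$ in the non-abelian case, independently of the current~$\size{H_1}$. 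Multiplying the per-call cost by the number of iterations gives the two claimed bounds.

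The main obstacle I anticipate is precisely the point that keeps the per-call cost pinned at the~$n/m$ scale rather than degrading as~$H_1$ grows toward~$H$. This rests on two facts that must be checked carefully: that \Cref{thm-fnc} measures query cost against~$n/m = n/\size{H}$ and not against~$n/\size{H_1}$, and that the hereditary trivial-intersection property keeps the \emph{same}~$G_0$ valid in every call so that~$\kappa$ stays equal to~$1$. Once these are confirmed, the remaining verification --- the~$H_1 \le H$ invariant, the doubling of~$\size{H_1}$, and the geometric summation inherited from \Cref{thm-fnc} --- is routine bookkeeping.
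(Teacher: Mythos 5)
Your proposal is correct and takes essentially the same approach as the paper's proof: the hereditary trivial-intersection observation (that $H_1 \le H$ implies $\size{G_0 \intersect H_1} = 1$, so the same~$G_0$ with~$\kappa = 1$ validates the hypotheses of \Cref{thm-fnc} in every call) is precisely how the paper argues that Find-New-Collision succeeds in each iteration, and your doubling/termination argument and per-call cost accounting match the paper's as well. Your write-up simply makes explicit some bookkeeping that the paper leaves implicit.
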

\begin{proof}
The existence of a subgroup~$G_0$ as in the statement of the theorem implies that all the hypotheses of \Cref{thm-fnc} are satisfied for every subgroup~$H_1$ of~$H$. So, starting with~$H_1 = \set{e}$, where~$e$ is the identity element of~$G$, Find-New-Collision($G, H_1, f$) returns a collision~$a,b$ in each iteration of Algorithm~\ref{alg-find-hs} in which~$H_1 \neq H$. Since~$ a^{-1} b \not\in H_1 $, the set~$S \union \set{ a^{-1} b}$ generates a larger subgroup of~$H$. Thus, the size of the subgroup~$H_1$ increases by a factor of at least~$2$ in every iteration in which a collision is found, and the algorithm terminates after at most~$\log_2 m$ iterations. The query complexity follows from \Cref{thm-fnc}.
\end{proof}
A subgroup as in the statement of \Cref{thm-find-hs} exists if~$G$ is the semidirect product of~$H$ with another subgroup, i.e., $H$ is normal and there is a subgroup~$K$ such that~$G = H \rtimes K$, or there is a normal subgroup~$K$ such that~$G = H \ltimes K$. We may then take~$G_0 \eqdef K$. Not all groups have  a semidirect product structure, even if the hidden subgroup is normal. For example, every proper subgroup~$H$ of the abelian group~$\integers_{p^k}$ with~$k > 1$ is normal, but~$ \integers_{p^k}$ cannot be expressed as a semidirect product of~$H$ with another subgroup. On the other hand, a group need not have a semidirect product structure for a subgroup with the properties in \Cref{thm-find-hs} to exist. For example, consider~$S_n$, the symmetric group of degree~$n$, and~$S_{n-1}$ as its subgroup consisting of permutations that map~$n$ to itself. Then~$S_n = S_{n - 1} H$, where~$H$ is the subgroup generated by the cycle~$(1 ~ 2 ~ 3 ~ \dotsb ~ n)$, and neither~$S_{n - 1}$ nor~$H$ is normal in~$S_n$ for~$n \ge 4$. Such groups are known as the \emph{bicrossed\/} products (also as \emph{Zappa-Sz{\'e}p} or \emph{knit} products); see, e.g., \cite{ACIM09-bicrossed-product,Brin05-ZS-product}. Thus, Algorithm~\ref{alg-find-hs} finds the hidden subgroup~$H$ with query complexity as in \Cref{thm-find-hs} whenever~$G$ is the bicrossed product of~$H$ with another group. A description of groups arising as bicrossed products is a matter of ongoing research~\cite{ACIM09-bicrossed-product}.

\section{Open problems}
\label{sec-open}

We conclude with a few open problems. The query complexity of the algorithm designed by Ye and Li~\cite{YL21-hsp-classical} for finding the hidden subgroup in abelian instances may be smaller than that of Algorithm~\ref{alg-find-ahs}. The lower query complexity hinges on an intricate analysis of the structure of the hidden subgroup. Can we establish the same query complexity through simpler means? 

There are a number of variants of the Hidden Subgroup Problem, for example, when the underlying group is specified implicitly. These may also admit deterministic algorithms with optimal classical query complexity. The precise characterization of the deterministic query complexity of the Hidden Subgroup Problem for explicitly specified groups, especially in the non-abelian case, is perhaps the most interesting problem left open by this work. Related questions are whether there is a generating pair of size~$\Order( \sqrt{n} \,)$ for any non-abelian group of order~$n$, for what instances of the problem Algorithms~\ref{alg-fc} and~\ref{alg-fnc} give the correct output with query complexity~$\widetilde{\Order}( \sqrt{ n/m} \,)$, where~$m$ is the order of the hidden subgroup, or whether there are similar ``generic'' algorithms that achieve this query complexity for larger classes of groups.

%\bibliographystyle{plain}
%\bibliography{refs}

\end{document}